\def\ep{\epsilon}
\def\al{\alpha}
\def\epsilon{\varepsilon}
\newcommand{\abs}[1]{| #1 |}
\newtheorem{theorem}{Theorem}
\newtheorem{remark}{Remark}
\newtheorem{corollary}{Corollary}
\newtheorem{lemma}{Lemma}
\newtheorem{proposition}{Proposition}
\newcommand{\beq}{\begin{eqnarray*}}
\newcommand{\eeq}{\end{eqnarray*}}
\newcommand{\beqq}{\begin{eqnarray}}
\newcommand{\eeqq}{\end{eqnarray}}
\newcommand{\bequa}{\begin{equation}}
\newcommand{\eequa}{\end{equation}}
\newcommand{\bit}{\begin{itemize}}
\newcommand{\eit}{\end{itemize}}
\newcommand{\tc}{\textsc}
\title{Did Harold Zuercher Have Time-Separable Preferences?\thanks{
Lu: Department of Economics, University of California, Los Angeles (email: \href{mailto:jay@econ.ucla.edu}{jay@econ.ucla.edu}); Luo: Department of Economics, University of Toronto (email: \href{mailto:yao.luo@utoronto.ca}{yao.luo@utoronto.ca}); Saito: Division of the Humanities and Social Sciences, California Institute of Technology (email: \href{mailto:saito@caltech.edu}{saito@caltech.edu}); Xin: Division of the Humanities and Social Sciences, California Institute of Technology (email: \href{mailto:yixin@caltech.edu}{yixin@caltech.edu}).
Financial support from the NSF under awards SES-1919263 (Saito) and SES-1919275 (Lu) are gratefully acknowledged.
We thank seminar and conference participants at Caltech, Georgetown University, Texas A\&M, University of Arizona, UC Berkeley, UC Irvine, University of Colorado Boulder, and NASMES (2023). We appreciate the valuable discussions we had with Fedor Iskhakov, Peter Klibanoff, Robert Miller, Marciano Siniscalchi, Costis Skiadas, Tomasz Strazalecki and John Rust. In particular, Mitsuru Igami read an earlier version of the manuscript and offered helpful comments. 
}}
\author{Jay Lu, Yao Luo, Kota Saito, and Yi Xin}
\date{\today}
\begin{document}

\maketitle
 
\begin{abstract}
    \singlespacing
    This paper proposes an empirical model of dynamic discrete choice to allow for non-separable time preferences, generalizing the well-known \cite{rust1987optimal} model. Under weak conditions, we show the existence of value functions and hence well-defined optimal choices. 
    We construct a contraction mapping of the value function and propose an estimation method similar to Rust's nested fixed point algorithm.  
    Finally, we apply the framework to the bus engine replacement data. 
    We improve the fit of the data with our general model and reject the null hypothesis that Harold Zuercher has separable time preferences. 
    Misspecifying an agent's preference as time-separable when it is not leads to biased inferences about structure parameters (such as the agent's risk attitudes) and misleading policy recommendations.
    
\end{abstract}

\vspace{.5cm}

\noindent \textbf{Keywords:} dynamic discrete choice, Epstein-Zin preferences, preference for the timing of resolution of uncertainty

\newpage 
\section{Introduction}\label{sec:introduction}

It is well-known that in the standard model of expected utility with separable time preferences, an agent's risk preference and her preference for intertemporal substitution are both captured by the curvature of her Bernoulli utility function. As a result, greater risk aversion is correlated with greater complementarity of consumption across different time periods. Risk and time however are two distinct phenomena and it would be natural to have different preferences over the two. Furthermore, time-separable models are widely recognized for imposing an unrealistic constraint on agents' preferences regarding the timing of uncertainty resolution. As \cite{rust1994structural} pointed out,  time-separable ``expected-utility models imply that agents are indifferent about the timing of the resolution of uncertain events, whereas human decision-makers seem to have definite preferences over the time at which uncertainty is resolved.''

Models with non-separable time preferences such as \cite{epsein1989sub} allow for a clean separation of the two: the elasticity of intertemporal substitution is independent of risk aversion. Moreover, the models allow agents to have nontrivial preferences regarding the timing of the resolution of uncertainty, as suggested by the experimental literature (see, e.g., \citet{nielsen2020preferences}, \citet{meissner2022measuring}).
These feature makes them popular in fields such as macroeconomics and finance, where they have been used to explain phenomena such as the equity premium puzzle (see, e.g., \citet{mehra1985equity} and \cite{bansalyaron2004}). 


However, current empirical research has yet to incorporate non-separable preferences within the dynamic discrete choice (DDC) framework.  Misspecifying an agent's preference as time-separable when it is not can result in biased inferences about structural primitives and misleading policy recommendations. 
Our paper aims to bridge this gap. We begin by outlining a set of theoretical results crucial for empirical analysis in the DDC framework. Following this, we introduce an empirical model of dynamic discrete choice that incorporates non-separable time preferences, extending the well-known bus engine replacement model analyzed in \cite{rust1987optimal}. Our model includes the standard time-separable model, commonly employed in the DDC literature, as a special case. This is an important and empirically relevant feature as it allows us to statistically test whether the decision maker (Harold Zuercher in the bus engine replacement example) has separable time preferences or not.

While our model allows for more general and richer time preferences, we show that it can be solved and estimated similarly as the standard DDC model.  
In the paper, we first show that agents with our nonseparable time-preferences have well-defined optimal dynamic choices. Based on the Bellman equation with our nonseparable time-preferences, we first prove the existence of the value function.
Specifically, we apply lattice theory and Tarski's fixed point theorem
 to show the existence of the smallest and largest fixed points of a mapping in the space of value functions. 
This allows us to use a simulated nested fixed point algorithm to find the smallest and largest fixed points for model estimation; if the two coincide, then the value function is unique. We also provide general sufficient conditions that ensure a contraction mapping and uniqueness of the value function. These conditions reduce to the standard $\beta <1$ condition for separable preferences but we also consider specific parametrizations (CARA and CRRA) of Epstein-Zin preferences.




We apply our framework to the bus engine replacement data in \cite{rust1987optimal}. In his original work, Rust assumes that the manager Harold Zuercher is risk neutral with separable time preferences. At the beginning of each period, Harold makes a dynamic choice for each bus engine by trading off between an immediate lump sum cost of replacing it and higher maintenance costs for keeping it. In addition to allowing Harold to have non-separable time preferences, we generalize the Rust model in several other aspects.
We assume that the manager may be risk-averse and may benefit from operating the bus (e.g., collecting passenger fares proportional to the additional mileage in each month). 

We estimate the model with non-separable and separable time preferences, respectively, under the CARA parameterization. Our estimation results of the non-separable model suggest that the agent is likely to prefer late resolution of uncertainty. Because the model with separable preferences is nested in the non-separable model, comparing the likelihood from the two models is straightforward, and we are able to statistically reject the null hypothesis that Harold has separable time preferences. 
Moreover, our estimates for the agent's risk attitudes and the payoff parameters are significantly different when allowing for non-separable preferences.
We find that, given the time-separable model, the risk preference parameter is overestimated, while the maintenance cost per mileage is underestimated.

To better interpret the model estimates, 
we compute the certainty equivalent given different model specifications. Specifically, 
we consider a counterfactual scenario in which a subsidy program assists the agent in smoothing revenue across periods.
We solve for the monetary payoff that makes the agent (Harold) indifferent when the uncertainty about incremental mileage in each period is eliminated. 
Our general model with non-separable preferences implies that Harold would accept a sure payment of \$120 each month to stay indifferent. 
In contrast, the certainty equivalent under the misspecified model with separable preferences is \$61.6, which is 48.7\% lower than the value implied by the general model.







\paragraph{Related Literature} There are many empirical applications of DDC models. See \citet{miller1984job} and \citet{buchholz2021semiparametric} for dynamic choices of workers, \citet{pakes1986patents} for patenting decisions, \citet{crawford2005uncertainty}, \citet{hendel2006measuring}, and \citet{gowrisankaran2012dynamics} for dynamic consumer demand. 
As far as we know, there is no existing work in the DDC literature that has incorporated non-separable time preferences.
Our main methodological advance is in developing a feasible specification of DDC models with EZ preferences, arguably the most popular model of non-separable time preferences.
Following \cite{rust1987optimal}, we propose a nested fixed point algorithm for estimating our model. 
Our proposed algorithm can be applied to estimate models with separable and non-separable preferences, and it therefore provides a useful tool for testing various model specifications. 

The use of non-separable models has been common in the macrofinance literature since \citet{kreps1978temporal} and \citet{epsein1989sub}.
For example, \cite{bansalyaron2004} provide a unified explanation for several long-standing puzzles in asset markets using a recursive utility model. \citet{epstein2014much} explore the quantitative implications of recursive utility on temporal resolution of uncertainty. 
We find that the risk preference parameter is significantly overestimated under the misspecified model with separable preferences, suggesting the importance of separating preferences for risk and intertemporal substitution.

In the theoretical literature, \cite{frick2019dynamic} provide an axiomatic analysis of dynamic random utility, focusing on the history-dependency of choices and the type of randomness inherent in these models. \cite{lu2020_repeated} provide  a theoretical model of dynamic stochastic choice in which the agent has non-separable preferences. They provide axiomatic foundations and also consider EZ preferences in an application focusing on its choice-theoretic implications. In contrast, our paper focuses on the DDC model as the workhorse model for analyzing dynamic decision processes in structural econometrics.



The rest of the paper is organized as follows. We describe EZ preferences and the dynamic discrete choice model setup in Section \ref{sec:model}. Theoretical results regarding the existence and uniqueness of the value function are provided in Section \ref{sec:theory}. We discuss the estimation strategy and our empirical application in Section \ref{sec:estimation} and \ref{sec:application}, respectively. Section \ref{sec:conclusion} concludes. The proofs are delegated to the appendices.

\section{Non-separable Time Preferences\label{sec:model}}

\subsection{Setup \label{sec:ezp}}

In this section, we introduce non-separable time preferences in an
infinite-horizon recursive framework. First, recall that under standard
risk and intertemporal preferences, the utility function is given
by 
\begin{equation}
\mathbb{E}_{c}\left[\left(1-\beta\right)u\left(c\right)+\beta\mathbb{E}_{v|c}\left[v\right]\right].\label{eq:risk_sep}
\end{equation}
Here, $u$ is a strictly increasing utility function over a set $\mathcal{C}$
of real-valued payoffs, $\beta\in\left[0,1\right]$ is the discount factor and $v$ is the future continuation value.\footnote{Note that since the value function is recursive, this is equivalent
to $u\left(c\right)+\beta v$ as the $1-\beta$ term is a scalar that
can be factored out.} 
Let $\mathcal{U}$ denote the range of $u$. It is easy to see here
that risk is additively-separable across time.\footnote{To see this point, notice that because of the linearity of the model, (\ref{eq:risk_sep}) reduces to $(1-\beta )\mathbb{E}_c[u(c)]+ \beta \mathbb{E}_v[v]$, where the first expectation captures the risk over current consumption; while the second expectation captures the risk over future consumption.}\label{footnote:add_sep}

Following \citet{kreps1978temporal} and \citet{epsein1989sub}, we
consider a generalization of standard preferences that allow for \textit{non-separable
time preferences}, i.e., where the risk is not additively-separable
across time.\footnote{Due to the non-linearity of $\phi$, the utility function (\ref{eq:ez_preference}) cannot be expressed as the additive form of two expectations, unlike the time-separable model described in footnote \ref{footnote:add_sep}.}

We consider the following representation: 
\begin{equation}
\mathbb{E}_{c}\left[\phi\left(\left(1-\beta\right)u\left(c\right)+\beta\phi^{-1}\left(\mathbb{E}_{v|c}\left[v\right]\right)\right)\right],\label{eq:ez_preference}
\end{equation}
where $\phi$ is a strictly increasing aggregator
function that is either concave or convex.\footnote{ \citet{epsein1989sub} consider
an even more general formulation.} 
Refer to Section 2.2 for parametric examples of the utility function (\ref{eq:ez_preference}), which will aid in understanding the general formulation presented in (\ref{eq:ez_preference}).

As we will see below, the curvature of $\phi$ captures the agent's
attitudes towards how risk is resolved across different time periods.   When $\phi$ is linear, this reduces to the standard time-separable
preferences, which  implies the the indifference to the timing of resolution of uncertainty, otherwise, risk cannot be additively separated
across time as in Equation (\ref{eq:risk_sep}). Note that although
the above formula is recursive and assumes infinite periods, the functional
form can be easily applied to finite time periods by calculating utilities
from the last period via backwards induction. 

The essential difference from separable preferences is the role of
the aggregator $\phi$, which allows for greater flexibility in modeling
risk and time preferences. In fact, agents may exhibit non-trivial
preferences over \textit{when} risk is resolved. 
Consider the two lotteries shown in Figure \ref{fig:earlylate}. Both
lotteries are the same in terms of probabilities and outcomes: they
differ only in the timing of when uncertainty is resolved. In the
left lottery $P$, the uncertainty about period 2 outcome ($10$
or $0$) is resolved in period 1. In the right lottery $Q$, on the
other hand, the uncertainty is resolved in period 2. If the agent always prefers a lottery where uncertainty is resolved in period 1 (e.g., $P$) to a lottery where uncertainty is resolved in period 2 (e.g., $Q$), then we say  that he \textit{prefers
early resolution of uncertainty}. If the agent always prefers a lottery where uncertainty is resolved in period 2 (e.g. $Q$) to a lottery where uncertainty is resolved in period 1 (e.g. $P$), then we say he \textit{prefers late resolution of uncertainty}.
If he is indifferent between the two lotteries, then we say that he is \textit{indifferent to the timing of resolution of uncertainty}.

\begin{figure}[htbp!]
\centering 
\caption{Early vs. late resolution of uncertainty}
\includegraphics[scale=0.35]{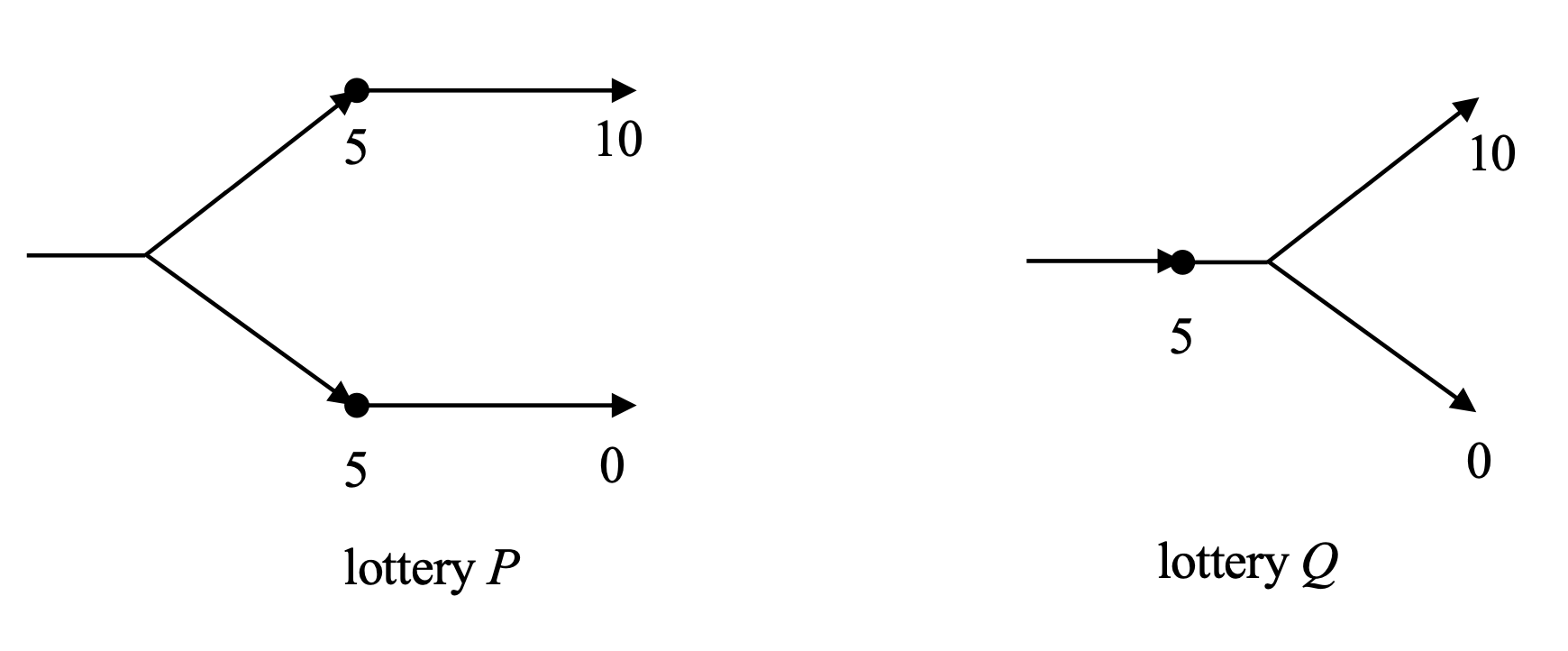} 
\label{fig:earlylate} 
\end{figure}

In our model, preferences for the timing of resolution of uncertainty
are fully captured by the the curvature of of $\phi$. To see why,
consider the example above. Applying \eqref{eq:ez_preference} to
two-periods, the value of lottery $P$ is given by
\[
V\left(P\right)=\mathbb{E}\left[\phi\left(\left(1-\beta\right)u\left(5\right)+\beta z\right)\right]
\]
where $z$ is a random variable that takes on either $\left(1-\beta\right)u\left(0\right)$
or $\left(1-\beta\right)u\left(10\right)$. Note here that all uncertainty
about $z$ is resolved in the first period. On the other hand, the
value of lottery $Q$ is given by
\[
V\left(Q\right)=\phi\left(\left(1-\beta\right)u\left(5\right)+\beta\phi^{-1}\left(\mathbb{E}\left[\phi\left(z\right)\right]\right)\right)
\]
Here, the uncertainty about $z$ is not resolved until the second
period, so the expectation is taken inside the function $\phi^{-1}\left(\cdot\right)$.
If we define $\tilde{\phi}\left(z\right):=\phi\left(\left(1-\beta\right)u\left(5\right)+\beta z\right)$,
then the agent prefers early resolution of uncertainty if $V\left(P\right)\geq V\left(Q\right)$
or
\begin{alignat*}{1}
\mathbb{E}\left[\tilde{\phi}\left(z\right)\right] & \geq\tilde{\phi}\left(\phi^{-1}\left(\mathbb{E}\left[\phi\left(z\right)\right]\right)\right)\\
\tilde{\phi}^{-1}\left(\mathbb{E}\left[\tilde{\phi}\left(z\right)\right]\right) & \geq\phi^{-1}\left(\mathbb{E}\left[\phi\left(z\right)\right]\right)
\end{alignat*}
which holds if $\tilde{\phi}$ is less concave than $\phi$.

We can now generalize how the curvature of $\phi$ captures preference
for early or late resolution of uncertainty. Recall that the Arrow-Pratt
measure of a function $\phi$ is given by 
\[
A_{\phi}\left(z\right):=-\frac{\phi^{\prime\prime}\left(z\right)}{\phi^{\prime}\left(z\right)}.
\]

\begin{proposition}\label{prop:concave} The agent prefers early
(late) resolution of uncertainty iff for all $y\in\mathcal{U}$, $A_{\phi_{y}}\left(z\right)\leq A_{\phi}\left(z\right)$
(resp. $A_{\phi_{y}}\left(z\right)\geq A_{\phi}\left(z\right)$)
where $\phi_{y}\left(z\right):=\phi\left(\left(1-\beta\right)y+\beta z\right)$.
\end{proposition}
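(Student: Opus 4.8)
The plan is to reduce Proposition~\ref{prop:concave} to the classical Pratt--Arrow comparison of risk aversion, applied to the pair consisting of $\phi$ and the auxiliary function $\phi_y$, and then to quantify over $y$.

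First I would make the reduction to a two-period comparison explicit. Fix a pair of lotteries $(P,Q)$ of the form in Figure~\ref{fig:earlylate}: a deterministic first-period payoff $c$ with $u(c)=y\in\mathcal{U}$, followed by a second-period outcome whose (normalized) continuation value is a random variable $z$ supported in $\mathcal{U}$, where in $P$ the period-2 uncertainty is resolved in period~1 and in $Q$ it is resolved in period~2. The computation preceding the proposition gives $V(P)=\mathbb{E}[\phi_y(z)]$ and $V(Q)=\phi_y\big(\phi^{-1}(\mathbb{E}[\phi(z)])\big)$, so, since $\phi_y$ is strictly increasing, ``the agent always weakly prefers $P$ to $Q$'' over all such pairs is equivalent to
\[
\phi_y^{-1}\!\big(\mathbb{E}[\phi_y(z)]\big)\;\geq\;\phi^{-1}\!\big(\mathbb{E}[\phi(z)]\big)\qquad\text{for all }y\in\mathcal{U}\text{ and all }z\text{ supported in }\mathcal{U}.
\]
Note that $(1-\beta)y+\beta z\in\mathcal{U}$ whenever $y,z\in\mathcal{U}$ because $\mathcal{U}=\operatorname{ran}u$ is an interval, so $\phi_y$ is a well-defined strictly increasing $C^2$ function on $\mathcal{U}$ and all the expressions above make sense. (In a genuinely multi-period problem the same reduction goes through by unfolding \eqref{eq:ez_preference} recursively and comparing the two continuation values period by period; only the displayed inequality ends up mattering.) The ``late resolution'' case is identical with the inequality reversed.

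Next I would invoke the standard lemma on comparison of certainty equivalents: for strictly increasing $C^2$ functions $F,G$ on an interval $I$, one has $F^{-1}(\mathbb{E}[F(z)])\geq G^{-1}(\mathbb{E}[G(z)])$ for every random variable $z$ supported in $I$ if and only if $A_F\leq A_G$ pointwise on $I$. The ``$\Leftarrow$'' direction is Jensen's inequality applied to $F\circ G^{-1}$, which is convex precisely because $A_F\leq A_G$, followed by the increasing map $F^{-1}$; the ``$\Rightarrow$'' direction follows by testing the inequality against small binary gambles concentrated near an arbitrary point of $I$ and taking a second-order Taylor expansion. Applying this with $I=\mathcal{U}$, $F=\phi_y$, and $G=\phi$ shows that, for each fixed $y$, the displayed inequality holds for all $z$ iff $A_{\phi_y}\leq A_{\phi}$ on $\mathcal{U}$; taking the conjunction over all $y\in\mathcal{U}$ gives the early-resolution characterization, and reversing every inequality gives the late-resolution one. (By the chain rule $A_{\phi_y}(z)=\beta\,A_{\phi}\big((1-\beta)y+\beta z\big)$, so the condition can equivalently be written as $\beta\,A_{\phi}\big((1-\beta)y+\beta z\big)\leq A_{\phi}(z)$ for all $y,z$.)

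I expect the main obstacle to be the bookkeeping in the first step rather than any hard analysis: pinning down precisely what ``the agent prefers early resolution'' quantifies over, confirming that the relevant comparison reduces cleanly to the two-period one, and making sure that $\phi_y$---not $\phi$---plays the role of the ``less risk-averse'' function, so that the direction of $A_{\phi_y}\leq A_\phi$ comes out correctly. Once that is settled, the analytic content is entirely the classical Pratt--Arrow lemma.
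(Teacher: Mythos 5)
Your argument is correct, and the analytic content (the Pratt--Arrow certainty-equivalent comparison applied to the pair $\phi_y$, $\phi$, then quantified over $y\in\mathcal{U}$) is sound: the reduction $V(P)=\mathbb{E}[\phi_y(z)]$ versus $V(Q)=\phi_y\bigl(\phi^{-1}(\mathbb{E}[\phi(z)])\bigr)$ matches the computation in the text, the Jensen/convexity equivalence $A_F\leq A_G \Leftrightarrow F\circ G^{-1}$ convex is right, and you get the direction of the inequality (that $\phi_y$ must be the \emph{less} risk-averse function) correct. The one thing to know is that the paper does not actually prove this proposition at all --- its ``proof'' is a citation to \citet{tomasz2013} and \citet{stanca2023} --- so there is no in-paper argument to compare against; what you have written is essentially the canonical proof that those references formalize, and it is the natural completion of the heuristic two-lottery calculation the paper sketches just before the proposition. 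The only place where your write-up is softer than a full proof is exactly where you flag it: the paper's definition of ``prefers early resolution'' is informal (it quantifies over lotteries of the Figure~\ref{fig:earlylate} type without fixing the support of the continuation value $z$ or the class of multi-period trees), so the reduction to the single displayed two-period inequality, and the claim that $z$ ranges over enough of $\mathcal{U}$ for the necessity direction to bite at every point, are definitional bookkeeping rather than mathematics. Given any reasonable formalization of the definition, your proof goes through.
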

\begin{proof}
See \citet{tomasz2013} and \citet{stanca2023}.\footnote{In \citet{tomasz2013}, this result is presented in the context
of ambiguity.}
\end{proof}

To see the condition in Proposition \ref{prop:concave} more explicitly,
note that
\[
A_{\phi_{y}}\left(z\right)=\beta A_{\phi}\left(\left(1-\beta\right)y+\beta z\right)
\]
so the agent prefers early resolution of uncertainty iff
\[
\beta A_{\phi}\left(\left(1-\beta\right)y+\beta z\right)\leq A_{\phi}\left(z\right)
\]
for all $y\in\mathcal{U}$. The condition for preference for late
resolution is the same but with the inequality reversed.

Preferences over the timing of resolution of uncertainty can play
an important role in agents' decision making process in dynamic environments.
Different choices are not only associated with different levels of
uncertainty, but are also related to when that uncertainty is realized
over time. 
For example, in Rust's bus engine replacement problem, replacing
an engine entails current (sure) cost but makes the future uncertainty
of engine failure resolved \emph{earlier}. On the other hand, not
replacing the engine leads to the future uncertainty of engine failure
resolved \emph{later}.


There is no conclusive experimental evidence on whether individuals have preference for early or late resolution of uncertainty. In fact,
agents' preference about the timing of resolution of uncertainty could
be highly context-dependent. For example, when uncertainty is over
positive consumption levels, a preference for early resolution may
be more reasonable. Indeed, in the macro-finance literature, the assumption
of preference for early resolution of uncertainty is often imposed.
On the other hand, in cases when payoffs can be very negative, individuals
may prefer to avoid early resolution of uncertainty. As a heuristic
example, consider a pronouncement of a fatal disease from a doctor.
Some patients may prefer to learn the news later because this information
can induce a great deal of disappointment, fear and anxiety. These
negative emotions, while may not have instrumental values, could still
directly affect their utility.\footnote{Note that in the definition of a preference for the timing of resolution
of uncertainty, the information must be non instrumental, i.e., the
agent cannot do anything given the information.}



\subsection{Parametric Special Cases\label{sec:para}}

We now present several parametrizations of non-separable time preferences
that have been used in the literature.

\subsubsection{CRRA Epstein-Zin}

By far the most common parametric form of Epstein-Zin preferences
is when both $u$ and $\phi$ have CRRA (or power) functional forms.
Here, the utility is given by $u\left(c\right)=c^{1-\rho}$ and the
time aggregator is given by 
\[
\phi\left(z\right)=\left(z^{\frac{1}{1-\rho}}\right)^{1-\alpha}=z^{\frac{1-\alpha}{1-\rho}}
\]
where $\alpha\leq1$ and $\rho\leq1$ are parameters that characterize intertemporal and
risk preferences respectively. Under these
assumptions, the value function (\ref{eq:ez_preference}) yields the
following form: 
\begin{alignat}{1}
\mathbb{E}_{c}\left[\left(\left(1-\beta\right)c^{1-\rho}+\beta\mathbb{E}_{v|c}\left[v\right]^{\frac{1-\rho}{1-\alpha}}\right)^{\frac{1-\alpha}{1-\rho}}\right].\label{eq:crra_ez}
\end{alignat}

To see how  $\rho$ and $\alpha$ characterize intertemporal and
risk preferences, first consider the case where there is no risk so we can remove the expectations in \eqref{eq:ez_preference}. Redefining a new value function as $\tilde{v}=\phi^{-1}\left(v\right)$, we obtain from \eqref{eq:ez_preference} that the new value function is given by
\[
\left(1-\beta\right)u\left(c\right)+\beta\tilde{v}=\left(1-\beta\right)c^{1-\rho}+\beta \tilde{v}
\]
where intertemporal preferences are determined solely by $\rho$. Here,
$\rho$ measures the elasticity of substitution, defined as the percentage
change of the growth rate of consumption (i.e. $c_{t+1}/c_{t}$) to
the percentage change of the intertemporal marginal rate of substitution
(MRS):\footnote{Recall that MRS is defined as $u^{\prime}\left(c_{t+1}\right)/u^{\prime}\left(c_{t}\right)$.}
\begin{equation}
\dfrac{d\log\left(c_{t+1}/c_{t}\right)}{d\log MRS_{t,t+1}}=\rho^{-1}.\label{eq:eis2}
\end{equation}
For risk preferences, consider a one-time shock that resolves within
a single-period followed by zero consumption forever. It straightforward
to see that the Bernoulli utility in \eqref{eq:crra_ez} reduces to
$c^{1-\alpha}$, which has standard constant relative risk aversion
(CRRA) parameter $\alpha$. The fact that both intertemporal and risk
preferences can be so neatly separated here (which is not possible
with standard separable preferences) speaks to its popularity in the
literature.

Proposition \ref{prop:concave} reduces to the following in this parametrization.

\begin{corollary} \label{corr:earlyCRRA}An agent with CRRA Epstein-Zin preferences prefers
early (late) resolution of uncertainty iff $\rho\leq\alpha$ (resp.
$\rho\geq\alpha$) iff $\phi$ is concave (resp. convex). \end{corollary}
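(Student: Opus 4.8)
The plan is to establish the two equivalences separately: the purely algebraic one, ``$\rho\le\alpha$ iff $\phi$ is concave,'' and the behavioral one, ``$\phi$ concave iff the agent prefers early resolution,'' the latter by specializing Proposition~\ref{prop:concave} to the power aggregator. Throughout I would work with $\rho,\alpha<1$, so that $p:=\frac{1-\alpha}{1-\rho}$ is a well-defined nonnegative exponent and $\phi(z)=z^{p}$ is smooth on the relevant domain, which is contained in $(0,\infty)$ because $u(c)=c^{1-\rho}>0$ and hence the continuation values entering $\phi$ are positive; the boundary cases $\rho=1$ or $\alpha=1$ reduce to logarithmic aggregators and can be recovered as limits.

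For the algebraic equivalence I would simply note that the power function $z\mapsto z^{p}$ on $(0,\infty)$ is concave exactly when $p\le1$ and convex exactly when $p\ge1$, and that, since $1-\rho>0$, the inequality $\frac{1-\alpha}{1-\rho}\le1$ rearranges to $1-\alpha\le1-\rho$, i.e.\ to $\rho\le\alpha$ (and symmetrically for the reversed inequalities). This gives ``$\rho\le\alpha$ iff $\phi$ concave'' and ``$\rho\ge\alpha$ iff $\phi$ convex'' immediately.

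For the behavioral equivalence I would first compute the Arrow--Pratt coefficient of $\phi(z)=z^{p}$: from $\phi'(z)=pz^{p-1}$ and $\phi''(z)=p(p-1)z^{p-2}$ one gets $A_{\phi}(z)=-\phi''(z)/\phi'(z)=(1-p)/z$. Using the identity $A_{\phi_{y}}(z)=\beta A_{\phi}\big((1-\beta)y+\beta z\big)$ recorded right after Proposition~\ref{prop:concave}, that proposition says the agent prefers early resolution iff $\beta A_{\phi}\big((1-\beta)y+\beta z\big)\le A_{\phi}(z)$ for all $y\in\mathcal U$, which here becomes
\[
\frac{\beta(1-p)}{(1-\beta)y+\beta z}\le\frac{1-p}{z}\qquad\text{for all }y\in\mathcal U,\ z>0.
\]
If $\phi$ is strictly concave then $1-p>0$; cancelling it (direction preserved) and cross-multiplying by the positive quantities $z$ and $(1-\beta)y+\beta z$ reduces the inequality to $0\le(1-\beta)y$, which holds since $y>0$ and $\beta<1$, so the agent prefers early resolution. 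If $\phi$ is strictly convex then $1-p<0$; cancelling it reverses the inequality to $0\ge(1-\beta)y$, which fails, while the same computation applied to the ``late'' condition $\beta A_{\phi}(\cdot)\ge A_{\phi}(\cdot)$ goes through, so the agent prefers late resolution. If $\phi$ is linear then $A_{\phi}\equiv0$ and both conditions hold with equality, giving indifference. Chaining this with the algebraic equivalence yields Corollary~\ref{corr:earlyCRRA}.

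The computation is short, so there is no serious obstacle; the one place that requires care is the sign of the factor $1-p$ when it is cancelled, since that sign is precisely what converts the early-resolution inequality into the late-resolution one, and one should also check (or explicitly exclude) the degenerate cases $\beta=1$ and the logarithmic limits $\rho=1$, $\alpha=1$.
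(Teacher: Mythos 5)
Your proof is correct and follows essentially the same route as the paper's: compute the Arrow--Pratt coefficient $A_{\phi}(z)=\bigl(1-\tfrac{1-\alpha}{1-\rho}\bigr)z^{-1}$ for the power aggregator, apply Proposition~\ref{prop:concave} via the identity $A_{\phi_{y}}(z)=\beta A_{\phi}\bigl((1-\beta)y+\beta z\bigr)$, and observe that the resulting inequality holds for all $y$ exactly when $\tfrac{1-\alpha}{1-\rho}\le 1$, i.e.\ $\rho\le\alpha$, which is also precisely the concavity condition. Your version is merely more explicit about the sign of the factor $1-\tfrac{1-\alpha}{1-\rho}$ when cancelling it and about positivity of the domain, which the paper's terser proof leaves implicit.
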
 
\begin{proof}
Since $\phi\left(z\right)=z^{\frac{1-\alpha}{1-\rho}}$, 
\begin{alignat*}{1}
A_{\phi}\left(z\right) & =\left(1-\frac{1-\alpha}{1-\rho}\right)z^{-1},\\
A_{\phi_{y}}\left(z\right) & =\beta\left(1-\frac{1-\alpha}{1-\rho}\right)\left(\left(1-\beta\right)y+\beta z\right)^{-1}.
\end{alignat*}
Thus, $A_{\phi_{y}}\left(z\right)\leq A_{\phi}\left(z\right)$
iff $\frac{1-\alpha}{1-\rho}\leq1$ or $\rho\leq\alpha$. The result
follows from Proposition \ref{prop:concave} and noting that $\phi$
is concave (convex) iff $\rho\leq\alpha$ (resp. $\rho\geq\alpha$). 
\end{proof}

\subsubsection{CARA Epstein-Zin}
\label{sec:cara}

An alternative parameterization is when $u$ has the CARA (or exponential)
functional form. Here, the utility is given by $u\left(c\right)=u_{\rho}\left(c\right)$
where
\begin{equation}
u_{\rho}\left(c\right):=\rho^{-1}\left(1-e^{-\rho c}\right)\label{eq:cara}
\end{equation}
and the time aggregator is given by
\[
\phi\left(z\right)=u_{\alpha}\left(u_{\rho}^{-1}\left(z\right)\right)=\alpha^{-1}\left(1-\left(1-\rho z\right)^{\frac{\alpha}{\rho}}\right)
\]
for parameters $\alpha\geq0$ and $\rho\geq0$ that characterize intertemporal and
risk preferences. Under these assumptions,
the value function (\ref{eq:ez_preference}) yields the following
form
\begin{alignat}{1}
\mathbb{E}_{c}\left[\alpha^{-1}\left(1-\left(\left(1-\beta\right)e^{-\rho c}+\beta\left(1-\mathbb{E}_{v|c}\left[\alpha v\right]\right){}^{\frac{\rho}{\alpha}}\right){}^{\frac{\alpha}{\rho}}\right)\right].\label{eq:cara_ez}
\end{alignat}
These preferences have appeared in \citet{skiadas2009asset} and are
exactly the CARA analog of CRRA Epstein-Zin preferences. One advantage
of this formulation is that it allows for negative payoffs, which
CRRA cannot accommodate. For example, in \citet{rust1987optimal},
the agent needs to pay a large cost when choosing to replace the bus
engine resulting in net negative revenue that period.%

As in CRRA Epstein-Zin, the parameters $\rho$ and $\alpha$ characterize
intertemporal and risk preferences. First, in the absence of risk, we can follow the same reasoning as for CRRA Epstein-Zin above by redefining a new value function $\tilde{v}=\phi^{-1}\left(v\right)$. We now obtain from \eqref{eq:ez_preference} that the new value function is given by
\[
\left(1-\beta\right)u\left(c\right)+\beta\tilde{v}=\left(1-\beta\right)\rho^{-1}\left(1-e^{-\rho c}\right)+\beta \tilde{v}
\]
so intertemporal preferences are determined solely by $\rho$. Here,
in contrast to CRRA Epstein-Zin, $\rho$ measures the responsiveness
of the \textit{difference} in consumption between two periods (i.e.
$c_{t+1}-c_{t}$) to the percentage change of MRS:
\begin{equation}
\dfrac{d\left(c_{t+1}-c_{t}\right)}{d\log MRS_{t,t+1}}=\rho^{-1}.\label{eq:eis}
\end{equation}
For risk preferences, as before, consider a one-time shock that resolves
within a single-period followed by zero consumption forever. It straightforward
to see that the Bernoulli utility in \eqref{eq:cara_ez} has Arrow-Pratt
coefficient 
\[
\frac{\left(1-\beta\right)\alpha+\beta e^{\rho c}\rho}{\left(1-\beta\right)+\beta e^{\rho c}}.
\]
This is a convex combination between $\alpha$ and $\rho$ where the
weight on $\alpha$ is decreasing in $\beta$ and consumption $c$.
Note that holding all else constant, risk aversion is increasing in
$\alpha$. Unlike CRRA Epstein-Zin preferences, while intertemporal
preferences depend solely on $\rho$, risk preferences here depend
on both $\alpha$ and $\rho$. 

Proposition \ref{prop:concave} reduces to the following in this parametrization.

\begin{corollary}An agent with CARA Epstein-Zin preferences prefers
early (late) resolution of uncertainty iff $\rho\leq\alpha$ (resp.
$\rho\geq\alpha$) iff $\phi$ is concave (resp. convex). \end{corollary}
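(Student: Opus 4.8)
The plan is to follow the proof of Corollary~\ref{corr:earlyCRRA} almost verbatim, replacing the CRRA aggregator by the CARA one and then invoking Proposition~\ref{prop:concave}. First I would record the relevant derivatives of $\phi(z)=\alpha^{-1}\bigl(1-(1-\rho z)^{\alpha/\rho}\bigr)$: a direct computation gives $\phi'(z)=(1-\rho z)^{\frac{\alpha}{\rho}-1}$ and $\phi''(z)=(\rho-\alpha)(1-\rho z)^{\frac{\alpha}{\rho}-2}$, so that
\[
A_{\phi}(z)=-\frac{\phi''(z)}{\phi'(z)}=\frac{\alpha-\rho}{1-\rho z},
\qquad
A_{\phi_{y}}(z)=\beta A_{\phi}\bigl((1-\beta)y+\beta z\bigr)=\frac{\beta(\alpha-\rho)}{1-\rho\bigl((1-\beta)y+\beta z\bigr)},
\]
where the second identity is the one noted right after Proposition~\ref{prop:concave}.

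Next I would pin down the signs of the denominators. Since $u_{\rho}(c)=\rho^{-1}(1-e^{-\rho c})$, its range $\mathcal{U}$ is contained in $(-\infty,\rho^{-1})$, so every $z\in\mathcal{U}$ satisfies $\rho z<1$, hence $1-\rho z>0$; and for $y,z\in\mathcal{U}$ the convex combination $(1-\beta)y+\beta z$ again lies below $\rho^{-1}$, so its denominator is positive too. This also shows that on the relevant domain $\phi''$ has the sign of $\rho-\alpha$, i.e.\ $\phi$ is concave iff $\rho\le\alpha$ and convex iff $\rho\ge\alpha$; that disposes of the second ``iff'' in the statement.

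It then remains to compare $A_{\phi_{y}}(z)$ with $A_{\phi}(z)$. Clearing the (positive) denominators, $A_{\phi_{y}}(z)\le A_{\phi}(z)$ is equivalent to $(\alpha-\rho)\bigl[\beta(1-\rho z)-\bigl(1-\rho((1-\beta)y+\beta z)\bigr)\bigr]\le 0$, and the bracket collapses to $-(1-\beta)(1-\rho y)$. Hence $A_{\phi_{y}}(z)\le A_{\phi}(z)$ iff $(\alpha-\rho)(1-\beta)(1-\rho y)\ge 0$; since $1-\beta>0$ and $1-\rho y>0$ for every $y\in\mathcal{U}$, this holds for all $y$ exactly when $\alpha\ge\rho$, and the reversed inequality $A_{\phi_{y}}(z)\ge A_{\phi}(z)$ holds for all $y$ exactly when $\alpha\le\rho$. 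Proposition~\ref{prop:concave} then yields that the agent prefers early (resp.\ late) resolution iff $\rho\le\alpha$ (resp.\ $\rho\ge\alpha$), which combined with the previous paragraph is equivalent to $\phi$ being concave (resp.\ convex).

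The only point that needs a little care is the sign bookkeeping when clearing denominators, because $\alpha-\rho$ may be positive or negative; writing the condition as the single product $(\alpha-\rho)(1-\beta)(1-\rho y)$ sidesteps any case split. The one genuinely substantive ingredient, beyond Proposition~\ref{prop:concave}, is that $\mathcal{U}$ is bounded above by $\rho^{-1}$, which forces $1-\rho y>0$ and hence makes the comparison go the same way uniformly in $y$.
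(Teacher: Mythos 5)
Your proposal is correct and follows essentially the same route as the paper: compute $A_{\phi}(z)=\frac{\alpha-\rho}{1-\rho z}$ and $A_{\phi_{y}}(z)=\frac{\beta(\alpha-\rho)}{1-\rho((1-\beta)y+\beta z)}$, use the fact that the range of $u_{\rho}$ is bounded above by $\rho^{-1}$ so the denominators are positive, and invoke Proposition~\ref{prop:concave}. Your explicit reduction of the comparison to the sign of $(\alpha-\rho)(1-\beta)(1-\rho y)$ is just a more careful writing-out of the step the paper compresses into one sentence.
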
 
\begin{proof}
Since $\phi\left(z\right)=\alpha^{-1}\left(1-\left(1-\rho z\right)^{\frac{\alpha}{\rho}}\right)$,
\begin{alignat*}{1}
A_{\phi}\left(z\right) & =\frac{\alpha-\rho}{1-\rho z},\\
A_{\phi_{y}}\left(z\right) & =\frac{\beta\left(\alpha-\rho\right)}{1-\rho\left(\left(1-\beta\right)y+\beta z\right)}.
\end{alignat*}
Thus, $A_{\phi_{y}}\left(z\right)\leq A_{\phi}\left(z\right)$
iff $\rho\leq\alpha$ (noting that $y\leq\rho^{-1}$ given the range
of $u=u_{\rho}$). The result follows from Proposition \ref{prop:concave}
and noting that $\phi$ is concave (convex) iff $\rho\leq\alpha$
(resp. $\rho\geq\alpha$).
\end{proof}

\subsubsection{Separable Preferences}

Finally, both CRRA and CARA Epstein-Zin preferences nest separable
preferences as special cases. When $\alpha=\rho$ in either case,
the time aggregator becomes linear, i.e. $\phi\left(z\right)=z$,
and the value function becomes
\begin{equation}
\left(1-\beta\right)\mathbb{E}_{c}\left[u\left(c\right)\right]+\beta\mathbb{E}_{v}\left[v\right]\label{eq:sep}
\end{equation}
where $u$ is either of the CRRA or CARA functional form. Furthermore,
when $\alpha=\rho\rightarrow0$ under either, $u\left(c\right)=c$
so this reduces to the separable risk-neutral case. This is the specification
commonly adopted in many empirical studies, including the engine replacement
model in \citet{rust1987optimal}. 

The fact that separable preferences, i.e. $\alpha=\rho$, is nested
in our more general parametrizations allows us to statistically test
whether the agent has separable preferences. While separability provides
computational convenience, misspecifying the agent's preference as
additively time separable when it is not can lead to systematic biases
in inferences of the structural primitives of the model. We see this
for the case of CARA Epstein-Zin preferences in Section \ref{sec:application}.

\section{Dynamic Discrete Choice: General Setup and Theoretical Results}\label{sec:theory}


We now apply the general non-separable preferences defined
in the previous section to a general dynamic discrete choice framework.
We provide two key theoretical results. The first result pertains to the existence of
a value function, while the second addresses its uniqueness.



\subsection{General Setup}
\label{sec:ddc}
Denote the set of states by $\mathcal{X}$ and sets of actions by $\mathcal{D}$. We
assume $\mathcal{D}$ is finite. The timing of our setup is as follows. 
At the
beginning of period $t$, 
the state $x_t$ realizes and then an idiosyncratic shock $\varepsilon_{t}$ realizes. The agent then makes a choice $d_{t}\in \mathcal{D}$.
Finally, the uncertainty about another variable $\Delta_t$, which affects the current period consumption is realized.
The agent consumes the realized consumption that period, which is given by
\begin{equation}
c(d_{t},x_{t},\Delta_t,\varepsilon_{t})=\pi(d_{t},x_{t},\Delta_t)+\varepsilon_{t}(d_{t}),\label{eq:current_payoff}
\end{equation}
where $\pi(d_{t},x_{t},\Delta_{t})$ represents the deterministic consumption.
We assume that $\varepsilon_{t}(d_{t})$ is a random shock for alternative $d_t$. The vector of random shocks $\varepsilon_t = (\varepsilon_t(1), \cdots, \varepsilon_t(\abs{\mathcal{D}}))$ is drawn from a cumulative distribution function $G(\cdot)$.


We represent the agent's optimization problem using the Bellman’s equation as follows.
\begin{align}
    V(x_t, \varepsilon_t) & = \max_{d \in \mathcal{D}} \bigg\{ 
    \mathbb{E}_{\Delta_{t}|d,x_{t}}\bigg[\phi\bigg(\left(1-\beta\right)u\big(c\left(d,x_{t},\Delta_{t},\varepsilon_{t}\right)\big) \nonumber \\
    & \quad \quad \quad  + \beta\phi^{-1}\big(\mathbb{E}_{x_{t+1},\varepsilon_{t+1} | d, x_t, \Delta_t}[V(x_{t+1}, \varepsilon_{t+1})]\big)\bigg)\bigg]
    \bigg\}. \label{eq:bellman}
\end{align}
The first expectation is due to uncertainty in the current-period payoff shock $\Delta_{t}$, and the second expectation is due to uncertainty in the future state $(x_{t+1}, \varepsilon_{t+1})$. We follow the dynamic discrete choice literature to assume that (1) the random shock $\varepsilon$'s are independent over time and (2) conditional on the current state $x_t$ and choice $d_t$, the future state $x_{t+1}$ is independent of the unobserved state $\varepsilon_t$. Given these assumptions, the distribution of future state $(x_{t+1}, \varepsilon_{t+1})$ does not depend on $\varepsilon_t$. We therefore omit $\varepsilon_t$ in the second expectation $\mathbb{E}_{x_{t+1},\varepsilon_{t+1} | d, x_t, \Delta_t}$.

From the Bellman's Equation in (\ref{eq:bellman}), we define the ex-ante value function as
\begin{align}
\label{va2V}
    V(x_{t+1}) = \int V(x_{t+1}, \varepsilon_{t+1}) G(\varepsilon_{t+1}).
\end{align}
Note that $V(x_{t+1})$ represent the agent's expected value at the beginning of period $t+1$ before the realization of the future shock $\varepsilon_{t+1}$. 
If the agent chooses $d_t$, the value she receives can be represented as
\begin{equation}
\begin{split}
&v\left(d_{t},x_{t},\varepsilon_{t}\right)\\
&=\mathbb{E}_{\Delta_{t}|d_{t},x_{t}}\bigg[\phi\bigg(\left(1-\beta\right)u\big(c\left(d_{t},x_{t},\Delta_{t},\varepsilon_{t}\right)\big)+\beta\phi^{-1}\big(\mathbb{E}_{x_{t+1} | d_t, x_t, \Delta_t}[V(x_{t+1})]\big)\bigg)\bigg].\label{V2va}
\end{split}
\end{equation}
Note that $v\left(d_{t},x_{t},\varepsilon_{t}\right)$ is essentially the ``choice-specific'' value function defined in the dynamic discrete choice literature. 
The value functions in Equation \eqref{V2va} are non-separable in random shocks $\varepsilon_{t}$ even if we assume
that those shocks enter the current payoff additively as in Equation (\ref{eq:current_payoff}).
In the special case when $\phi$ is linear and the agent is risk-neutral, $\varepsilon_t$ is additively separable from other expressions in the value function and this reduces to the standard discrete choice setup.

The decision maker's optimal choice is defined as 
\[
d_{t}^{\ast}=\arg\max_{d \in \mathcal{D}}\text{ } v(d,x_{t},\varepsilon_{t}), 
\]
which leads to a Conditional Choice Probability (CCP) of choosing action
$d_t$ 
\begin{equation}
p(d_t|x_{t})=\int\mathbf{1}\bigg\{ d_t \in\arg\max_{d\in \mathcal{D}} \text{ }v(d,x_{t},\varepsilon_{t})\bigg\} dG(\varepsilon_{t}).\label{ccp}
\end{equation}
With the non-separability of random shocks, the CCP in Equation (\ref{ccp}) and the ex-ante
value function in Equation (\ref{va2V}) do not have closed-form solutions even if we assume that $\varepsilon$'s follow extreme value distributions.

Combining Equations \eqref{va2V} and \eqref{V2va}, we can define
a value function iteration operator $T$ in the space of the ex-ante
value functions. 
To simplify notation, define
\[
r(d_{t},x_{t}, \Delta_t,\varepsilon_{t},V):=\phi\left(\left(1-\beta\right)u\left(c\left(d_{t},x_{t}, \Delta_t, \varepsilon_{t}\right)\right)+\beta\phi^{-1}\left( \mathbb{E}_{x_{t+1}| d_t, x_t, \Delta_t}V\left(x_{t+1}\right)\right)\right).
\]
Then the operator $T$ is defined as 
\begin{equation}
T(V)(x_{t})=\int\max_{d_{t}\in \mathcal{D}}\bigg\{\int r(d_{t},x_{t},\Delta_{t},\varepsilon_{t},V)dF_{d_{t}, x_t}(\Delta_{t})\bigg\} dG(\varepsilon_{t}),\label{eq:contraction}
\end{equation}
where $F_{d_{t},x_{t}}(\Delta_t)$ represents the cumulative distribution
function of $\Delta_{t}$ given the current state $x_{t}$ and the agent's
choice $d_{t}$. The ex-ante continuation value $V$ is obtained as a solution  of the functional equation $T(V)=V$.


\begin{remark}
    When $\phi$ is a linear function, this model reduces to a standard dynamic discrete choice model with time-separable preferences and the choice-specific value function becomes (assuming $\phi(x)=x$ for simplicity)
\begin{equation}
v\left(d_{t},x_{t},\varepsilon_{t}\right)=\left(1-\beta\right)\mathbb{E}_{\Delta_t | d_t, x_t}\big[u\big(c\left(d_{t},x_{t},\Delta_t,\varepsilon_{t}\right)\big)\big]+\beta\mathbb{E}_{x_{t+1} | d_t, x_t}\big[V(x_{t+1})\big].\label{V2va_sep}
\end{equation}
Note that in this case, the current-period payoff shocks $\Delta_t$ are completely separated across time. 
\end{remark}

\begin{remark}
    When there is no current-period uncertainty (i.e., $\Delta_t$ does not enter the utility), the choice-specific value function can be simplified to
\begin{equation}
v\left(d_{t},x_{t},\varepsilon_{t}\right)=\phi\bigg(\left(1-\beta\right)u\big(c\left(d_{t},x_{t},\varepsilon_{t}\right)\big)+\beta\phi^{-1}\big(\mathbb{E}_{x_{t+1} | d_t, x_t}[V(x_{t+1})]\big)\bigg).\label{V2va_simp}
\end{equation}
Note that this does not reduce to the standard dynamic discrete choice setup. The reason is that there are uncertainties with respect to the future state $(x_{t+1},\varepsilon_{t+1})$, and expectations with respect to these uncertainties are taken in between the aggregators $\phi^{-1}$ and $\phi$. Whenever preferences are non-separable so $\phi$ is non-linear, Equation (\ref{V2va_simp}) differs from standard dynamic discrete choice model even if there is no current-period uncertainty.
\end{remark}

\subsection{Existence: Value Function}

In this section, we establish existence of a value function. We show that the operator $T$ defined in Equation \eqref{eq:contraction} always has a
fixed point under some mild assumptions. Moreover, we demonstrate how to obtain the largest and
smallest fixed points. 

Let $\underline{\pi}$ and $\bar{\pi}$ denote the smallest and largest
possible value of profit $\pi$. Define 
\begin{alignat*}{1}
v^{\ast} & :=\max\left\{ \mathbb{E}\left[\phi\left(u\left(\bar{\pi}+\hat{\varepsilon}\right)\right)\right],\phi\left(\mathbb{E}\left[u\left(\bar{\pi}+\hat{\varepsilon}\right)\right]\right)\right\} ,\\
v_{\ast} & :=\min\left\{ \mathbb{E}\left[\phi\left(u\left(\underline{\pi}+\varepsilon\right)\right)\right],\phi\left(\mathbb{E}\left[u\left(\underline{\pi}+\varepsilon\right)\right]\right)\right\} ,
\end{alignat*}
where $\hat{\varepsilon}:=\max_{d\in\mathcal{D}}\varepsilon_{d}$. These are bounds for the largest and smallest possible values of continuation values. Suppose $v^{\ast}$ and $v_{\ast}$ are both finite and let $\mathcal{V}$
be the set of value functions bounded between some $\underline{v}\leq v_{\ast}$
and $\bar{v}\geq v^{\ast}$, that is, the set of $V$ such that $\underline{v}\leq V\left(x\right)\leq\bar{v}$
for all $x\in\mathcal{X}$.

We first show that $T:\mathcal{V}\rightarrow\mathcal{V}$ is well-defined
(see Appendix \ref{sec:proof_theorem_1}). We then define a complete
lattice $\left(\mathcal{V},\geq\right)$, where $\geq$ is the partial
order on $\mathcal{V}$ such that $V\geq V^{\prime}$ if $V\left(x\right)\geq V^{\prime}\left(x\right)$
for all $x\in\mathcal{X}$.\footnote{A partially ordered set $\left(\mathcal{V},\geq\right)$ is a complete
lattice if every subset has both a supremum and an infimum in $\mathcal{V}$
according to the partial order $\geq$.} It is easy to see that $T$ is monotonic, that is, $V\geq V^{\prime}$
implies $T\left(V\right)\geq T\left(V^{\prime}\right)$.

\begin{lemma}\label{lem:monotone} $T$ is monotonic. \end{lemma}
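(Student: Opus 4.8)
The plan is to show that $T$ is built entirely out of order-preserving operations, so that its composition preserves the pointwise order on $\mathcal{V}$. Fix $V,V'\in\mathcal{V}$ with $V\geq V'$, and fix an action $d_t\in\mathcal{D}$, a state $x_t$, a payoff-shock realization $\varepsilon_t$, and a current-period shock realization $\Delta_t$. The argument proceeds by tracking the inequality through the definition of $r$ and then through the two integrations and the maximum in \eqref{eq:contraction}.

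First, since $V(x_{t+1})\geq V'(x_{t+1})$ for every realization of $x_{t+1}$, monotonicity of the conditional expectation gives $\mathbb{E}_{x_{t+1}|d_t,x_t,\Delta_t}[V(x_{t+1})]\geq\mathbb{E}_{x_{t+1}|d_t,x_t,\Delta_t}[V'(x_{t+1})]$. Because $\phi$ is strictly increasing, $\phi^{-1}$ is well-defined and increasing on the range of $\phi$, so applying $\phi^{-1}$ preserves the inequality; multiplying by $\beta\geq 0$ and adding the common term $(1-\beta)u(c(d_t,x_t,\Delta_t,\varepsilon_t))$ (which does not depend on the value function) preserves it as well; and a final application of the increasing function $\phi$ yields $r(d_t,x_t,\Delta_t,\varepsilon_t,V)\geq r(d_t,x_t,\Delta_t,\varepsilon_t,V')$ pointwise in all arguments. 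I then integrate out $\Delta_t$ against the probability measure $F_{d_t,x_t}$, which is monotone, to obtain $\int r(d_t,x_t,\Delta_t,\varepsilon_t,V)\,dF_{d_t,x_t}(\Delta_t)\geq\int r(d_t,x_t,\Delta_t,\varepsilon_t,V')\,dF_{d_t,x_t}(\Delta_t)$ for every $d_t$. Since this holds for each of the finitely many $d_t\in\mathcal{D}$, taking the pointwise maximum over $\mathcal{D}$ preserves the inequality, and a final integration over $\varepsilon_t$ against $G$ gives $T(V)(x_t)\geq T(V')(x_t)$. As $x_t$ is arbitrary, $T(V)\geq T(V')$.

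There is no substantial obstacle here; the only points needing care are that every integral appearing above is well-defined and finite, and that the argument of $\phi^{-1}$ always lies in the range of $\phi$ so that $\phi^{-1}$ can legitimately be applied. Both are precisely the content of the fact that $T$ maps $\mathcal{V}$ into itself, which is verified in Appendix \ref{sec:proof_theorem_1}, so the present lemma simply reads off monotonicity from that well-definedness together with the observation that conditional expectation, the increasing maps $\phi$ and $\phi^{-1}$, affine combination with the nonnegative weights $1-\beta$ and $\beta$, integration against probability measures, and the pointwise maximum over a finite set are all order-preserving.
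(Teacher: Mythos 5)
Your argument is correct and is exactly the paper's proof, just written out in full: the paper's one-line justification ("$u$ and $\phi$ are increasing, so $r$ is increasing in $V$") is precisely your chain of order-preserving operations (conditional expectation, $\phi^{-1}$, the affine combination with weights $1-\beta$ and $\beta$, $\phi$, integration over $\Delta_t$, the finite maximum, and integration over $\varepsilon_t$). Your additional remark that well-definedness of the integrals and of $\phi^{-1}$ is supplied by the verification that $T$ maps $\mathcal{V}$ into itself is a fair and accurate reading of how the paper organizes the argument.
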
 
\begin{proof}
Follows from the fact that $u$ and $\phi$ are all increasing functions
so $r$ is also increasing in $V$. 
\end{proof}
Since $T$ is monotonic, Tarski's fixed point theorem (see \citet{AB2006infinite})
guarantees that the set of fixed points of $T$ is non-empty and a
complete lattice. Moreover, if we start iterating from the smallest
and largest possible values of $V$, then we obtain the respective
smallest and largest fixed points of $T$. We summarize this as follows.

\begin{theorem} \label{theorem:existence} Suppose $v^{\ast}$ and
$v_{\ast}$ are finite and let $\mathcal{V}$ be the set of functions
bounded by some $\underline{v}\leq v_{\ast}$ and $\bar{v}\geq v^{\ast}$.
Then $T:\mathcal{V}\rightarrow\mathcal{V}$ has a fixed point. Moreover,
$\lim_{n}T^{n}\left(\underline{v}\right)$ and $\lim_{n}T^{n}\left(\bar{v}\right)$
are its smallest and largest fixed points respectively. \end{theorem}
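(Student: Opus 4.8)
The plan is to obtain the fixed point from Tarski's fixed point theorem and then pin down the least and greatest fixed points by monotone iteration. Tarski's theorem has two hypotheses: that $(\mathcal{V},\ge)$ is a complete lattice, and that $T$ is an order-preserving self-map of $\mathcal{V}$. Monotonicity of $T$ is Lemma~\ref{lem:monotone}, so two points remain. First, I would check $T(\mathcal{V})\subseteq\mathcal{V}$ (the step deferred to Appendix~\ref{sec:proof_theorem_1}): for $V\in\mathcal{V}$ the inner term $\beta\phi^{-1}\!\left(\mathbb{E}_{x_{t+1}}V(x_{t+1})\right)$ is squeezed between its values at the constant functions $\underline v$ and $\bar v$ (using that $\phi^{-1}$ is increasing, which is legitimate since finiteness of $v^\ast,v_\ast$ keeps the relevant quantities in range); feeding the extreme consumption levels $c=\underline\pi+\varepsilon$ and $c=\bar\pi+\hat\varepsilon$ into $r$, taking the max over the finite set $\mathcal{D}$, and integrating over $\Delta$ and $\varepsilon$, one gets $v_\ast\le T(V)(x)\le v^\ast$ for all $x$, so that $T(V)\in\mathcal{V}$. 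Second, $(\mathcal{V},\ge)$ is a complete lattice: the pointwise supremum and infimum of any subfamily of $\mathcal{V}$ are again functions valued in $[\underline v,\bar v]$, hence remain in $\mathcal{V}$.

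Granting these, Tarski's theorem (see \citet{AB2006infinite}) gives that the set of fixed points of $T$ is nonempty and is itself a complete lattice; in particular it has a smallest and a largest element. This proves the first assertion.

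For the second assertion, observe that $\underline v\in\mathcal{V}$ and $T:\mathcal{V}\to\mathcal{V}$ force $T(\underline v)\ge\underline v$; monotonicity then makes $\left(T^n(\underline v)\right)_n$ pointwise nondecreasing and bounded above by $\bar v$, so it converges pointwise to some $V_\infty\in\mathcal{V}$. To see $V_\infty$ is a fixed point, I would pass the limit through \eqref{eq:contraction}: the outer maximum is over the finite set $\mathcal{D}$; along the increasing iterates $\mathbb{E}_{x_{t+1}}T^n(\underline v)(x_{t+1})\uparrow\mathbb{E}_{x_{t+1}}V_\infty(x_{t+1})$ by the monotone convergence theorem, and continuity of $\phi^{-1}$ and of $\phi$ then gives $r(d,x,\Delta,\varepsilon,T^n(\underline v))\uparrow r(d,x,\Delta,\varepsilon,V_\infty)$ pointwise; a second application of monotone convergence to the $\Delta$- and $\varepsilon$-integrals yields $T^{n+1}(\underline v)(x)=T(T^n(\underline v))(x)\to T(V_\infty)(x)$, while the left-hand side also tends to $V_\infty(x)$, so $T(V_\infty)=V_\infty$. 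That $V_\infty$ is the \emph{smallest} fixed point follows because any fixed point $V^\dagger$ satisfies $V^\dagger\ge\underline v$, hence $V^\dagger=T^n(V^\dagger)\ge T^n(\underline v)$ for all $n$, and letting $n\to\infty$ gives $V^\dagger\ge V_\infty$. The statement for $\lim_n T^n(\bar v)$ is proved symmetrically, starting from $T(\bar v)\le\bar v$ and using the monotone convergence theorem for the nonincreasing (and $\bar v$-dominated) iterates.

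The main obstacle is the interchange of the monotone limit with the operator, i.e. the step $T(T^n(\underline v))\to T(V_\infty)$. One must verify that the limit genuinely propagates through the nested composition $\phi\!\left((1-\beta)u(\cdot)+\beta\phi^{-1}\!\left(\mathbb{E}_{x_{t+1}}[\,\cdot\,]\right)\right)$ and through both layers of integration; this is exactly where finiteness of $v^\ast$ and $v_\ast$ does real work, since it confines all quantities to a range on which $\phi$ and $\phi^{-1}$ are continuous, making the monotone convergence theorem applicable. The lattice-theoretic portion of the argument, by contrast, is routine bookkeeping.
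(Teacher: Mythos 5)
Your argument follows the paper's proof essentially step for step: verify that $T$ maps $\mathcal{V}$ into itself, observe that $(\mathcal{V},\ge)$ with pointwise infima and suprema is a complete lattice, invoke Tarski's theorem via the monotonicity in Lemma~\ref{lem:monotone}, and then identify the extremal fixed points by monotone iteration from $\underline v$ and $\bar v$ together with the comparison $T^n(\underline v)\le T^n(V^{\dagger})=V^{\dagger}$ for any fixed point $V^{\dagger}$. On the iteration step you are in fact more careful than the paper, which simply asserts that ``$T$ is continuous''; your justification of $T(T^n(\underline v))\to T(V_\infty)$ via monotone convergence pushed through the composition $\phi\circ(\cdot)\circ\phi^{-1}\circ\mathbb{E}$ supplies exactly the detail the paper leaves implicit.

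The one place your sketch falls short is the self-map verification. The inequality you assert, $v_\ast\le T(V)(x)\le v^\ast$, is neither what holds nor what is needed: when $\bar v>v^\ast$ the upper bound one actually obtains is a convex combination of the form $(1-\beta)v^\ast+\beta\bar v$, which can exceed $v^\ast$ but is still $\le\bar v$; the correct conclusion is $\underline v\le T(V)(x)\le\bar v$, which is all that membership in $\mathcal{V}$ requires. More substantively, passing from $\mathbb{E}\left[\phi\left((1-\beta)u(\bar\pi+\hat\varepsilon)+\beta\phi^{-1}(\bar v)\right)\right]$ to a bound expressed in terms of $v^\ast$ requires splitting on whether $\phi$ is convex or concave and applying Jensen's inequality in each case --- this is precisely why $v^\ast$ and $v_\ast$ are defined as a maximum and a minimum over the two quantities $\mathbb{E}[\phi(u(\cdot))]$ and $\phi(\mathbb{E}[u(\cdot)])$. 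Your one-line squeeze elides this case analysis; with it inserted (and the bound corrected to $\bar v$), the proof is complete and coincides with the paper's.
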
 
\begin{proof}
See Appendix \ref{sec:proof_theorem_1}. 
\end{proof}

Existing results assume separable time preferences. Our contribution is demonstrating that for non-separable time preferences, we can apply Tarski's theorem as long as the $T$ mapping
is monotonic (i.e. increasing) in the value function.\footnote{ \citet{jia2008} also uses Tarski's fixed point theorem but for a different problem (entry game). She also assumes separable time preferences.} This also provides
a way to compute the smallest and largest value functions by iterating
$T$. Although this does not guarantee a unique fixed point, we can
easily check by calculating the smallest and largest fixed points
and seeing if they coincide.

How about specific parametrizations of non-separable preferences?
Theorem \ref{theorem:existence} implies the following for CRRA Epstein-Zin
preferences.

\begin{corollary}Suppose the agent has CRRA Epstein-Zin preferences.
If $\mathbb{E}\left[\varepsilon\right]$ is finite, then $T$
has a fixed point.\end{corollary}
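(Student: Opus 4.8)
The plan is to reduce the corollary to Theorem~\ref{theorem:existence}: it suffices to check that the bounds $v^{\ast}$ and $v_{\ast}$ appearing there are both finite under the CRRA Epstein-Zin parametrization whenever $\mathbb{E}[\varepsilon]$ is finite. Once that is done, the existence of a fixed point of $T$ is immediate from the theorem.

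First I would compute the relevant compositions explicitly. With $u(c)=c^{1-\rho}$ and $\phi(z)=z^{\frac{1-\alpha}{1-\rho}}$, the Bernoulli composition collapses to $\phi(u(c))=c^{1-\alpha}$, while $\phi(\mathbb{E}[u(c)])=\big(\mathbb{E}[c^{1-\rho}]\big)^{\frac{1-\alpha}{1-\rho}}$. Substituting $c=\bar{\pi}+\hat{\varepsilon}$ into the definition of $v^{\ast}$ gives $v^{\ast}=\max\{\mathbb{E}[(\bar{\pi}+\hat{\varepsilon})^{1-\alpha}],\,(\mathbb{E}[(\bar{\pi}+\hat{\varepsilon})^{1-\rho}])^{\frac{1-\alpha}{1-\rho}}\}$, and analogously for $v_{\ast}$ with $\underline{\pi}$ and the relevant shock.

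Next I would bound each term by an affine function via concavity. Over the admissible parameter range (nonnegative risk and intertemporal parameters together with $\alpha,\rho\le 1$, so $1-\alpha,1-\rho\in[0,1]$), the maps $t\mapsto t^{1-\alpha}$ and $t\mapsto t^{1-\rho}$ are concave on $(0,\infty)$, hence each dominated by one of its tangent lines: there are constants $a,b\ge 0$ with $t^{1-\alpha}\le a+bt$ for all $t>0$, and similarly for $t^{1-\rho}$. Therefore $\mathbb{E}[(\bar{\pi}+\hat{\varepsilon})^{1-\alpha}]\le a+b(\bar{\pi}+\mathbb{E}[\hat{\varepsilon}])$, which is finite since $\mathbb{E}[|\hat{\varepsilon}|]\le\sum_{d\in\mathcal{D}}\mathbb{E}[|\varepsilon_{d}|]<\infty$ when $\mathbb{E}[\varepsilon]$ is finite; the same bound controls $\mathbb{E}[(\bar{\pi}+\hat{\varepsilon})^{1-\rho}]$, and raising a finite nonnegative quantity to the positive power $\frac{1-\alpha}{1-\rho}$ preserves finiteness, so $v^{\ast}<\infty$. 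For $v_{\ast}$, both quantities inside the minimum are powers of a positive base, hence nonnegative, so $v_{\ast}\ge 0>-\infty$; I would still record $\mathbb{E}[(\underline{\pi}+\varepsilon)^{1-\rho}]<\infty$ by the identical affine-domination step so that the second term is a genuine real number. With $v^{\ast}$ and $v_{\ast}$ both finite, Theorem~\ref{theorem:existence} delivers a fixed point.

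There is no deep obstacle here; the one point that needs care is domain validity. CRRA utility is defined only on strictly positive arguments, so the argument tacitly relies on the maintained assumption that $c=\pi+\varepsilon>0$ almost surely (which in particular makes $\underline{\pi}+\varepsilon>0$ and renders the powers above meaningful); I would state this explicitly. Relatedly, if one allowed $\alpha<0$ the exponent $1-\alpha$ would exceed $1$, concavity would fail, and finiteness of $v^{\ast}$ would demand a higher moment of $\hat{\varepsilon}$ than $\mathbb{E}[\varepsilon]<\infty$ provides, so the corollary implicitly rests on the standard nonnegativity of the risk parameter.
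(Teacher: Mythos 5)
Your proposal is correct and follows essentially the same route as the paper: reduce to Theorem~\ref{theorem:existence}, note that $v_{\ast}$ is bounded because the power functions are nonnegative, and control $v^{\ast}$ using concavity of $t\mapsto t^{1-\alpha}$ and $t\mapsto t^{1-\rho}$ (the paper invokes Jensen's inequality directly where you use the tangent-line bound, which is the same idea) together with finiteness of $\mathbb{E}[\hat{\varepsilon}]$. Your added remarks on domain validity ($c>0$ for CRRA, and $\alpha\le 1$ being essential) are sensible housekeeping that the paper leaves implicit.
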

\begin{proof}
Since both $u$ and $\phi$ are both power functions, they both yield
positive values so $v_{\ast}$ must be finite. Next, note that 
\begin{alignat*}{1}
\mathbb{E}\left[u\left(\bar{\pi}+\hat{\varepsilon}\right)\right] & =\mathbb{E}\left[\left(\bar{\pi}+\hat{\varepsilon}\right)^{1-\rho}\right]\leq\left(\bar{\pi}+\mathbb{E}\left[\max_{d\in\mathcal{D}}\varepsilon_{d}\right]\right)^{1-\rho}
\end{alignat*}
where the inequality follows from the fact that $\rho\leq1$. Since
$\max_{d\in\mathcal{D}}\varepsilon_{d}\leq\sum_{d\in\in\mathcal{D}}\varepsilon_{d}$,
it has finite expectation. The reasoning for $\mathbb{E}\left[\phi\left(u\left(\bar{\pi}+\hat{\varepsilon}\right)\right)\right]=\mathbb{E}\left[\left(\bar{\pi}+\hat{\varepsilon}\right)^{1-\alpha}\right]$
is symmetric, so $v^{\ast}$ is finite and the result follows from
Theorem \ref{theorem:existence}. 
\end{proof}
With CARA Epstein-Zin preferences, Theorem \ref{theorem:existence}
implies the following.

\begin{corollary}\label{corr:existence}Suppose the agent has CARA
Epstein-Zin preferences. If $\mathbb{E}\left[e^{-t\varepsilon}\right]$
is finite for all $t\in\mathbb{R}$, then $T$ has
a fixed point. \end{corollary}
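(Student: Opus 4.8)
The plan is to reduce the corollary to Theorem~\ref{theorem:existence}: it suffices to check that, under the hypothesis that $\mathbb{E}\left[e^{-t\varepsilon}\right]$ is finite for every $t\in\mathbb{R}$, both bounds $v^{\ast}$ and $v_{\ast}$ are finite. The observation that makes this routine is the composition identity $\phi\circ u=u_{\alpha}$ specific to the CARA parametrization: since $u=u_{\rho}$ and $\phi\left(z\right)=u_{\alpha}\left(u_{\rho}^{-1}\left(z\right)\right)$, we get $\phi\left(u\left(c\right)\right)=u_{\alpha}\left(c\right)=\alpha^{-1}\left(1-e^{-\alpha c}\right)$. Hence the ``outer'' terms $\mathbb{E}\left[\phi\left(u\left(\cdot\right)\right)\right]$ in the definitions of $v^{\ast}$ and $v_{\ast}$ are simply expectations of CARA utilities, which reduce directly to moment generating functions of the shocks.

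For the upper bound, I would first write
\[
\mathbb{E}\left[\phi\left(u\left(\bar{\pi}+\hat{\varepsilon}\right)\right)\right]=\mathbb{E}\left[u_{\alpha}\left(\bar{\pi}+\hat{\varepsilon}\right)\right]=\alpha^{-1}\left(1-e^{-\alpha\bar{\pi}}\,\mathbb{E}\left[e^{-\alpha\hat{\varepsilon}}\right]\right),
\]
and then bound $e^{-\alpha\hat{\varepsilon}}=\min_{d\in\mathcal{D}}e^{-\alpha\varepsilon_{d}}\leq\sum_{d\in\mathcal{D}}e^{-\alpha\varepsilon_{d}}$, whose expectation is finite by hypothesis (taking $t=\alpha$); so this term is finite. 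For the other term in the maximum defining $v^{\ast}$, note that $u=u_{\rho}$ takes values in $\left(-\infty,\rho^{-1}\right)$, so $\mathbb{E}\left[u\left(\bar{\pi}+\hat{\varepsilon}\right)\right]=\rho^{-1}\left(1-e^{-\rho\bar{\pi}}\,\mathbb{E}\left[e^{-\rho\hat{\varepsilon}}\right]\right)$ is finite by the same argument (now with $t=\rho$) and still lies in $\left(-\infty,\rho^{-1}\right)$. Since this interval is precisely where $\phi\left(z\right)=\alpha^{-1}\left(1-\left(1-\rho z\right)^{\alpha/\rho}\right)$ is real-valued and finite, $\phi\left(\mathbb{E}\left[u\left(\bar{\pi}+\hat{\varepsilon}\right)\right]\right)$ is finite as well, and therefore $v^{\ast}<\infty$. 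The degenerate cases $\alpha\to0$ or $\rho\to0$, where the corresponding CARA utility becomes the identity, go through identically, since $\mathbb{E}\left[e^{-t\varepsilon}\right]<\infty$ for all real $t$ controls both tails and in particular makes $\mathbb{E}\left[\hat{\varepsilon}\right]$ finite.

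The lower bound $v_{\ast}$ is treated symmetrically, with $\underline{\pi}$ and $\varepsilon$ in place of $\bar{\pi}$ and $\hat{\varepsilon}$: the outer term equals $\alpha^{-1}\left(1-e^{-\alpha\underline{\pi}}\,\mathbb{E}\left[e^{-\alpha\varepsilon}\right]\right)$, which is finite, and $\mathbb{E}\left[u\left(\underline{\pi}+\varepsilon\right)\right]=\rho^{-1}\left(1-e^{-\rho\underline{\pi}}\,\mathbb{E}\left[e^{-\rho\varepsilon}\right]\right)\in\left(-\infty,\rho^{-1}\right)$, so $\phi$ evaluated there is finite. Hence $v_{\ast}>-\infty$, and Theorem~\ref{theorem:existence} yields a fixed point of $T$.

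I do not anticipate a real obstacle. The only subtlety is that in the CARA case $\phi$ is not defined on all of $\mathbb{R}$ --- it requires $1-\rho z\geq0$ --- so one must confirm that the arguments $\mathbb{E}\left[u\left(\cdot\right)\right]$ fed into $\phi$ remain in its domain; but this is automatic because $u=u_{\rho}$ is bounded above by $\rho^{-1}$, hence so is any average of its values.
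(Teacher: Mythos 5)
Your proposal is correct and follows essentially the same route as the paper: reduce to Theorem \ref{theorem:existence} by checking finiteness of $v^{\ast}$ and $v_{\ast}$ via the composition identity $\phi\circ u=u_{\alpha}$ and the moment-generating-function hypothesis. The only (harmless) difference is that the paper dispatches $v^{\ast}$ immediately from the upper bound $\phi\leq\alpha^{-1}$, whereas you verify the relevant expectations explicitly; your extra care about the domain of $\phi$ is a nice touch but not a departure.
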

\begin{proof}
Since $\phi=u_{\alpha}\circ u_{\rho}^{-1}$ where $u_{\alpha}$ is
CARA, $\phi$ is bounded above by $\alpha^{-1}$. Thus, $v^{\ast}=\alpha^{-1}$
is finite. Next, note that 
\begin{alignat*}{1}
\mathbb{E}\left[u\left(\underline{\pi}+\varepsilon\right)\right] & =\mathbb{E}\left[u_{\rho}\left(\underline{\pi}+\varepsilon\right)\right]=\rho^{-1}\left(1-e^{-\rho\underline{\pi}}\mathbb{E}\left[e^{-\rho\varepsilon}\right]\right)
\end{alignat*}
which is finite given the assumption. The reasoning for $\mathbb{E}\left[\phi\left(u\left(\underline{\pi}+\varepsilon\right)\right)\right]=\mathbb{E}\left[u_{\alpha}\left(\underline{\pi}+\varepsilon\right)\right]$
is symmetric, so $v_{\ast}$ is finite and the result follows from
Theorem \ref{theorem:existence}.
\end{proof}
The condition in Corollary \ref{corr:existence} is weak; it is equivalent
to saying that the moment generating function for $-\varepsilon$
always exists. This is true if $\varepsilon$ is Normal or Extreme-Value
distributed, which we note below.

\begin{corollary} Suppose the agent has CARA Epstein-Zin preferences
and $\varepsilon$ either has a Normal or Extreme Value Type I distribution.
Then starting with the initial point $v^{\ast}=\alpha^{-1}$, $\lim_{n}T^{n}\left(v^{\ast}\right)$
converges to its largest fixed point. \end{corollary}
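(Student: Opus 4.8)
The plan is to deduce this directly from Theorem \ref{theorem:existence}, applied with the upper bound $\bar{v}$ chosen to be the constant function $v^{\ast}=\alpha^{-1}$. Recall from the proof of Corollary \ref{corr:existence} that, under CARA Epstein-Zin, $\phi=u_{\alpha}\circ u_{\rho}^{-1}$ is bounded above by $\alpha^{-1}$, so $v^{\ast}=\alpha^{-1}$ is automatically finite. Hence the only thing standing between us and Theorem \ref{theorem:existence} is finiteness of $v_{\ast}$, i.e. the hypothesis of Corollary \ref{corr:existence}, which I would verify directly for the two distributions in question.

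Following the computation in the proof of Corollary \ref{corr:existence}, write $\mathbb{E}\left[u\left(\underline{\pi}+\varepsilon\right)\right]=\rho^{-1}\left(1-e^{-\rho\underline{\pi}}\,\mathbb{E}\left[e^{-\rho\varepsilon}\right]\right)$ and, using $\phi\circ u=u_{\alpha}$, $\mathbb{E}\left[\phi\left(u\left(\underline{\pi}+\varepsilon\right)\right)\right]=\alpha^{-1}\left(1-e^{-\alpha\underline{\pi}}\,\mathbb{E}\left[e^{-\alpha\varepsilon}\right]\right)$; also note that $\phi$ evaluated at the finite number $\mathbb{E}\left[u\left(\underline{\pi}+\varepsilon\right)\right]\leq\rho^{-1}$ is again a finite real. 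So $v_{\ast}$ is finite as soon as $\mathbb{E}\left[e^{-\rho\varepsilon}\right]$ and $\mathbb{E}\left[e^{-\alpha\varepsilon}\right]$ are finite. If $\varepsilon$ is Normal, its moment generating function $t\mapsto\mathbb{E}\left[e^{t\varepsilon}\right]$ is finite for every $t\in\mathbb{R}$, so both are finite. If $\varepsilon$ is Extreme Value Type I with location $\mu$ and scale $s>0$, then $\mathbb{E}\left[e^{t\varepsilon}\right]=e^{\mu t}\,\Gamma\left(1-st\right)$, which is finite whenever $st<1$; taking $t=-\rho$ and $t=-\alpha$ and using $\rho,\alpha\geq0$, the arguments satisfy $-s\rho\leq0<1$ and $-s\alpha\leq0<1$, so both expectations are again finite.

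Having established that $v^{\ast}$ and $v_{\ast}$ are finite, I would invoke Theorem \ref{theorem:existence} on the lattice $\mathcal{V}$ of value functions bounded between some $\underline{v}\leq v_{\ast}$ and $\bar{v}\geq v^{\ast}$, choosing $\bar{v}$ to be the constant function $v^{\ast}=\alpha^{-1}$: the theorem then gives $T:\mathcal{V}\rightarrow\mathcal{V}$ and that $\lim_{n}T^{n}\left(v^{\ast}\right)$ is the largest fixed point of $T$, which is the claim. (One can also see this by hand: since $\phi\leq\alpha^{-1}$ we have $T\left(V\right)\leq v^{\ast}$ for every $V$, so $T^{n}\left(v^{\ast}\right)$ is decreasing and, being bounded below by $\underline{v}$, converges to a fixed point that dominates every fixed point, as each fixed point satisfies $V=T^{n}\left(V\right)\leq T^{n}\left(v^{\ast}\right)$.) The one step that calls for care is the tail estimate in the Extreme Value case: that moment generating function is finite only on a half-line, so one must check that the exponents actually needed, $-\rho$ and $-\alpha$, are non-positive and hence land in the finite region — which is precisely what the proof of Corollary \ref{corr:existence} uses, the blanket ``for all $t$'' notwithstanding. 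Everything else is an immediate application of results already proved.
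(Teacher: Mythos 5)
Your proposal is correct and follows the same route as the paper: verify that $v^{\ast}$ and $v_{\ast}$ are finite and then invoke Theorem \ref{theorem:existence} (via Corollary \ref{corr:existence}) with $\bar{v}=v^{\ast}=\alpha^{-1}$. The one substantive difference is that your extra care with the Extreme Value case is not pedantry but an actual repair of the paper's argument: the paper asserts that the Gumbel distribution ``has a well-defined (finite) moment generating function'' and that ``the same holds for $-\varepsilon$,'' but in fact $\mathbb{E}\left[e^{-t\varepsilon}\right]=e^{-\mu t}\,\Gamma\left(1+st\right)$ is finite only for $t>-1/s$, so the literal hypothesis of Corollary \ref{corr:existence} (finiteness for \emph{all} $t\in\mathbb{R}$) fails for Gumbel errors. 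Your observation that the proof of Corollary \ref{corr:existence} only ever uses the exponents $t=\rho$ and $t=\alpha$, both nonnegative and hence inside the region of finiteness, is exactly what is needed to make the corollary go through; your direct monotonicity argument for why iterating down from the constant upper bound $\alpha^{-1}$ reaches the largest fixed point is also a nice self-contained supplement to the appeal to Tarski.
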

\begin{proof}
Note that both the Normal distribution and the Extreme Value Type
I (Gumbel) distribution have well-defined (finite) moment generating
functions. It is easy to see that the same holds for $-\varepsilon$.
\end{proof}

\subsection{Uniqueness: Contraction Mapping}

In this section, we present conditions under which $T$ constitutes
a contraction, ensuring the uniqueness of the value function. First,
recall from Proposition \ref{prop:concave} that $\phi_{y}\left(z\right)=\phi\left(\left(1-\beta\right)y+\beta z\right)$
and define
\[
\psi_{y}\left(z\right):=\phi_{y}\left(\phi^{-1}\left(z\right)\right).
\]
Recall that $\underline{\pi}$ and $\bar{\pi}$ are the smallest and
largest possible value of profit $\pi$. Also recall that $\underline{v}$
and $\bar{v}$ are the bounds of the value function (which can be
infinite). We now present our second theorem.

\begin{theorem}\label{theorem:contraction}$T$ is a contraction
mapping if
\begin{equation}
M:=\mathbb{E}\left[\max_{d\in D}\sup_{\pi\in\left[\underline{\pi},\bar{\pi}\right],z\in\left[\underline{v},\bar{v}\right]}\psi_{u\left(\pi+\varepsilon_{d}\right)}^{\prime}\left(z\right)\right]<1\label{eq:M}
\end{equation}
\end{theorem}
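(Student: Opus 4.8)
The plan is to show that $T$ is a contraction on $\mathcal{V}$ in the sup norm $\|V\|_{\infty}:=\sup_{x\in\mathcal{X}}|V(x)|$ with modulus $M$; uniqueness of the value function then follows from the Banach fixed point theorem together with Theorem~\ref{theorem:existence}. The crucial observation is that, since $c(d,x,\Delta,\varepsilon)=\pi(d,x,\Delta)+\varepsilon(d)$ and $\psi_{y}(z)=\phi\big((1-\beta)y+\beta\phi^{-1}(z)\big)$, the integrand $r$ can be rewritten as
\[
r(d,x,\Delta,\varepsilon,V)=\psi_{u(\pi(d,x,\Delta)+\varepsilon_{d})}\!\Big(\mathbb{E}_{x_{t+1}\mid d,x,\Delta}\,V(x_{t+1})\Big).
\]
Because the inner term $(1-\beta)u(c)$ does not involve $V$, for any $V,V'\in\mathcal{V}$ the difference $r(d,x,\Delta,\varepsilon,V)-r(d,x,\Delta,\varepsilon,V')$ is exactly the increment of the single–variable function $z\mapsto\psi_{u(\pi+\varepsilon_{d})}(z)$ between $z=\mathbb{E}_{x_{t+1}}V(x_{t+1})$ and $z=\mathbb{E}_{x_{t+1}}V'(x_{t+1})$, both of which lie in $[\underline v,\bar v]$ since $V,V'\in\mathcal{V}$.

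First I would bound this increment pointwise. Applying the mean value theorem to $\psi_{u(\pi+\varepsilon_{d})}$ (which is differentiable with $\psi_{y}'(z)=\beta\,\phi'\big((1-\beta)y+\beta\phi^{-1}(z)\big)/\phi'(\phi^{-1}(z))>0$), together with the facts that a conditional expectation is non-expansive in sup norm and $\pi(d,x,\Delta)\in[\underline\pi,\bar\pi]$, one obtains
\[
\big|r(d,x,\Delta,\varepsilon,V)-r(d,x,\Delta,\varepsilon,V')\big|\;\le\;\Big(\sup_{\pi\in[\underline\pi,\bar\pi],\,z\in[\underline v,\bar v]}\psi_{u(\pi+\varepsilon_{d})}'(z)\Big)\,\|V-V'\|_{\infty},
\]
and the right-hand side depends on neither $\Delta$ nor $x$, only on $\varepsilon$ and $d$.

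Next I would push this bound through the three remaining operations defining $T$, each of which is non-expansive: (i) integration over $\Delta_{t}$ against $F_{d,x}$, under which the bound survives since it is $\Delta$-free; (ii) $\max_{d\in\mathcal{D}}$, using $|\max_{d}a_{d}-\max_{d}b_{d}|\le\max_{d}|a_{d}-b_{d}|$; and (iii) integration over $\varepsilon_{t}$ against $G$. This yields, for every $x$,
\[
\big|T(V)(x)-T(V')(x)\big|\;\le\;\mathbb{E}\Big[\max_{d\in\mathcal{D}}\sup_{\pi\in[\underline\pi,\bar\pi],\,z\in[\underline v,\bar v]}\psi_{u(\pi+\varepsilon_{d})}'(z)\Big]\,\|V-V'\|_{\infty}=M\,\|V-V'\|_{\infty},
\]
hence $\|T(V)-T(V')\|_{\infty}\le M\|V-V'\|_{\infty}$, a contraction because $M<1$. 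As a sanity check, when $\phi$ is linear one has $\psi_{y}'\equiv\beta$, so $M=\beta$ and we recover the classical condition $\beta<1$.

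I expect the core estimate to be routine and the real obstacles to be measure-theoretic bookkeeping: (a) verifying that $\varepsilon\mapsto\max_{d}\sup_{\pi,z}\psi_{u(\pi+\varepsilon_{d})}'(z)$ is measurable so that $M$ is well defined, which follows from continuity of $\psi_{y}'$ in its arguments; (b) handling the case where $\underline v$ or $\bar v$ is infinite, where one should work in the complete metric space of bounded functions and restrict attention to the fixed points produced by Theorem~\ref{theorem:existence}, which are bounded between the finite quantities $v_{\ast}$ and $v^{\ast}$; and (c) ensuring that differentiability (or, more generally, local absolute continuity of the convex or concave function $\psi_{y}$) is used only to justify the Lipschitz-type bound above, so that no measurable-selection issue arises — the interchange of $\mathbb{E}$ and $\sup$ is used only in the safe direction of an upper bound. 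Everything else (non-expansiveness of conditional expectations, of the integrals against $F$ and $G$, and of the max) is standard.
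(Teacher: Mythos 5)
Your proposal is correct and follows essentially the same route as the paper's proof: rewrite the integrand as $\psi_{u(\pi+\varepsilon_d)}$ evaluated at the conditional expectation of $V$, apply the mean value theorem, bound the derivative by its supremum over $\pi$ and $z$, and push the bound through the $\Delta$-integral, the max over $d$, and the $\varepsilon$-integral using their non-expansiveness. The only differences are cosmetic (you take the sup bound at the level of individual $r$-differences before the max over $d$, and you flag the measurability and boundedness caveats explicitly), so no further comment is needed.
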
 
\begin{proof}
See Appendix \ref{sec:proof_theorem_2}. 
\end{proof}
Theorem \ref{theorem:contraction} implies that when the bound $M<1$,
the ex-ante value function $V(\cdot)$ is a unique fixed point of
the contraction mapping $T$ as defined in equation (\ref{eq:contraction}).
For intuition, note that $\psi_{y}\left(z\right)$ captures
the value given current period utility $y$ and future continuation
value $z$. For instance, if $\phi\left(z\right)=z$ as in the standard
separable case, then
\[
\psi_{y}\left(z\right)=\left(1-\beta\right)y+\beta z.
\]
In this case, $\psi_{y}\left(\cdot\right)$ is linear so $\psi_{y}^{\prime}\left(z\right)=\beta$.
This immediately yields the standard result that coincides with Rust's
condition for a contraction mapping.

\begin{corollary}Suppose the agent has separable preferences. If
$\beta<1$, then $T$ is a contraction mapping.\end{corollary}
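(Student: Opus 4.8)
The plan is to derive this directly from Theorem \ref{theorem:contraction} by computing the quantity $M$ in \eqref{eq:M} explicitly under separability. First I would recall that separable preferences correspond to $\phi$ being linear (more generally affine), so without loss of generality take $\phi(z) = z$. Then $\phi_y(z) = \phi\big((1-\beta)y + \beta z\big) = (1-\beta)y + \beta z$ and $\phi^{-1}$ is the identity, so
\[
\psi_y(z) = \phi_y\big(\phi^{-1}(z)\big) = (1-\beta)y + \beta z .
\]
If one instead allows a general affine $\phi(z) = az + b$ with $a > 0$, the same computation gives $\psi_y(z) = a(1-\beta)y + \beta z + (1-\beta)b$, so the slope of $\psi_y$ in $z$ is unchanged.

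The key step is then to observe that $\psi_y(\cdot)$ is affine in $z$ with constant slope $\beta$, i.e.\ $\psi'_y(z) = \beta$ for every $y$ and every $z$. In particular this value does not depend on $y = u(\pi + \varepsilon_d)$, on $d$, on $\pi \in [\underline{\pi}, \bar{\pi}]$, or on $z \in [\underline{v}, \bar{v}]$, and it remains valid even when the value-function bounds $\underline{v}, \bar{v}$ are infinite, since the supremum of a constant function is that constant. Hence
\[
M = \mathbb{E}\!\left[\max_{d \in D}\ \sup_{\pi \in [\underline{\pi},\bar{\pi}],\, z \in [\underline{v},\bar{v}]} \psi'_{u(\pi + \varepsilon_d)}(z)\right] = \mathbb{E}[\beta] = \beta .
\]
Since $\beta < 1$ by hypothesis, $M < 1$, and Theorem \ref{theorem:contraction} immediately yields that $T$ is a contraction mapping.

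There is essentially no obstacle here: the only point requiring care is the reduction ``separable $\Rightarrow$ $\phi$ affine,'' which is exactly the characterization used throughout Section \ref{sec:model} (the discussion following \eqref{eq:ez_preference} and the Separable Preferences subsection), together with the harmless observation that an affine reparametrization of $\phi$ leaves $\psi'_y$ equal to $\beta$. Once that is in place, the corollary is a one-line substitution into \eqref{eq:M}, and it also recovers Rust's classical $\beta < 1$ contraction condition as the further special case $u(c) = c$.
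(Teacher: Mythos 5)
Your proposal is correct and follows essentially the same route as the paper: the text preceding the corollary makes exactly this observation, that $\phi(z)=z$ gives $\psi_y(z)=(1-\beta)y+\beta z$, hence $\psi_y'(z)=\beta$ identically and $M=\beta<1$ in \eqref{eq:M}, so Theorem \ref{theorem:contraction} applies. Your extra remark that a general affine $\phi(z)=az+b$ still yields slope $\beta$ is a harmless (and correct) bit of added care.
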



In general, when the agent does not have separable preferences, the
bounds on $\psi_{y}^{\prime}\left(z\right)$ would depend on the curvature
of $\phi$. More explicitly, we have
\begin{alignat*}{1}
\psi_{y}^{\prime}\left(z\right) & =\frac{\phi_{y}^{\prime}\left(\tilde{z}\right)}{\phi^{\prime}\left(\tilde{z}\right)}
\end{alignat*}
where $\tilde{z}=\phi^{-1}\left(z\right)$. We thus have
\[
\psi_{y}^{\prime\prime}\left(z\right)=\frac{\phi_{y}^{\prime}\left(\tilde{z}\right)}{\phi^{\prime}\left(\tilde{z}\right)^{2}}\left(A_{\phi}\left(\tilde{z}\right)-A_{\phi_{y}}\left(\tilde{z}\right)\right).
\]
From Proposition \ref{prop:concave}, this implies that $\psi_{y}^{\prime}\left(\cdot\right)$
is increasing (decreasing) if the agent prefers early (resp. late)
resolution of uncertainty. This implies the following.

\begin{corollary}Suppose the agent prefers early (late) resolution
of uncertainty. If $\mathbb{E}\left[\sup_{d\in D,\pi\in\left[\underline{\pi},\bar{\pi}\right]}\psi_{u\left(\pi+\varepsilon_{d}\right)}^{\prime}\left(\hat{v}\right)\right]<1$
where $\hat{v}=\bar{v}$ (resp. $\hat{v}=\underline{v}$), then $T$
is a contraction mapping. \end{corollary}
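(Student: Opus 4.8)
The plan is to derive this corollary directly from Theorem~\ref{theorem:contraction}: I want to show that, under the stated sign restriction, the corollary's hypothesis is \emph{exactly} the condition $M<1$ of~\eqref{eq:M}, so that $T$ is a contraction by that theorem. The key input is the monotonicity of $\psi_y'(\cdot)$ in $z$ already recorded before the statement. Recall that with $\tilde z=\phi^{-1}(z)$,
\[
\psi_{y}^{\prime\prime}\left(z\right)=\frac{\phi_{y}^{\prime}\left(\tilde{z}\right)}{\phi^{\prime}\left(\tilde{z}\right)^{2}}\left(A_{\phi}\left(\tilde{z}\right)-A_{\phi_{y}}\left(\tilde{z}\right)\right),
\]
and that $\phi,\phi_y$ strictly increasing gives $\phi_y'(\tilde z)/\phi'(\tilde z)^2>0$. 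By Proposition~\ref{prop:concave}, preference for early resolution is equivalent to $A_{\phi_y}\le A_\phi$ for all admissible $y$, hence $\psi_y''\ge 0$, i.e. $z\mapsto\psi_y'(z)$ is nondecreasing; preference for late resolution gives the reverse, so $\psi_y'$ is nonincreasing.

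Next I would fix an action $d\in D$ and a profit level $\pi\in[\underline\pi,\bar\pi]$, set $y=u(\pi+\varepsilon_d)$, and use this monotonicity to evaluate the inner supremum over $z$ at an endpoint: in the early-resolution case
\[
\sup_{z\in[\underline v,\bar v]}\psi_{u(\pi+\varepsilon_d)}'(z)=\psi_{u(\pi+\varepsilon_d)}'(\bar v),
\]
and symmetrically the supremum equals $\psi_{u(\pi+\varepsilon_d)}'(\underline v)$ in the late-resolution case (when $\bar v$ or $\underline v$ is infinite this is read as the corresponding monotone limit, which is precisely how $\psi_y'(\hat v)$ should be interpreted in the statement). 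Since $D$ is finite, the supremum over $d$ is a maximum, so applying $\max_{d}\sup_{\pi}$ to both sides and then taking the expectation over $\varepsilon$ yields
\[
M=\mathbb{E}\Big[\max_{d\in D}\sup_{\pi\in[\underline\pi,\bar\pi]}\psi_{u(\pi+\varepsilon_d)}'(\hat v)\Big]=\mathbb{E}\Big[\sup_{d\in D,\,\pi\in[\underline\pi,\bar\pi]}\psi_{u(\pi+\varepsilon_d)}'(\hat v)\Big],
\]
with $\hat v=\bar v$ for early and $\hat v=\underline v$ for late resolution. The hypothesis of the corollary says this quantity is strictly less than $1$, so $M<1$ and Theorem~\ref{theorem:contraction} delivers the conclusion.

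The only point requiring care — and the step I expect to be the main obstacle, though it is largely bookkeeping — is the interchange of the three suprema (over $z$, $\pi$, $d$) with the expectation and the treatment of infinite bounds $\underline v,\bar v$. For the interchange I would note that, for fixed $\hat v$, the map $(\pi,\varepsilon)\mapsto\psi_{u(\pi+\varepsilon_d)}'(\hat v)$ is continuous and $[\underline\pi,\bar\pi]$ is compact, so $\sup_{\pi}$ is attained and measurable in $\varepsilon$, and $\max_{d\in D}$ over a finite set preserves measurability; hence the expectation defining $M$ is the expectation of a genuine random variable. The monotonicity established in the first paragraph is what upgrades the generic inequality ``$\sup_z(\cdot)\ge(\cdot)$ at an endpoint'' to the exact identity used above, so no slack is introduced; and when $\hat v\in\{\pm\infty\}$ the monotone-limit reading of $\psi_y'(\hat v)$ coincides with $\sup_{z\in[\underline v,\bar v]}\psi_y'(z)$, keeping the two expressions for $M$ identical. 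Everything else is direct substitution into Theorem~\ref{theorem:contraction}.
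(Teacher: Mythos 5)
Your argument is exactly the paper's: the paper's one-line proof ("Follows from Proposition \ref{prop:concave} and Theorem \ref{theorem:contraction}") relies on the same observation you spell out, namely that the sign of $\psi_y''$ determined by Proposition \ref{prop:concave} makes $\psi_y'$ monotone in $z$, so the supremum over $z\in[\underline v,\bar v]$ in the definition of $M$ is attained at the appropriate endpoint $\hat v$, reducing $M<1$ to the stated hypothesis. Your additional care about measurability and infinite bounds is sound but not a departure from the paper's route.
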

\begin{proof}
Follows from Proposition \ref{prop:concave} and Theorem \ref{theorem:contraction}.
\end{proof}
Note that this bound can be further simplified if we know whether
$\phi$ is convex or concave. For instance, if $\phi$ is convex,
then $\psi_{y}^{\prime}\left(z\right)$ is increasing in $y$ so we
can use $\pi=\bar{\pi}$ as the bound when evaluating $M$ in Equation
(\ref{eq:M}). The case for concave $\phi$ is symmetric. 

How about specific parametrizations of non-separable preferences?
Theorem \ref{theorem:contraction} implies the following for CRRA
Epstein Zin preferences.

\begin{corollary}\label{coro:contraction_CRRA}Suppose the agent
has CRRA Epstein-Zin preferences. If $\rho\leq\alpha$ and $\beta^{\frac{1-\alpha}{1-\rho}}<1$,
then $T$ is a contraction mapping. \end{corollary}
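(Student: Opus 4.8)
The plan is to specialize the general sufficient condition of Theorem \ref{theorem:contraction} by computing $\psi_{y}'$ in closed form for the CRRA parametrization and showing it is bounded uniformly by $\beta^{\frac{1-\alpha}{1-\rho}}$. Write $\gamma:=\tfrac{1-\alpha}{1-\rho}$. Since $\rho\le\alpha\le 1$ and $\rho\le1$ we have $1-\rho\ge 1-\alpha\ge 0$, so $\gamma\in[0,1]$; moreover $\gamma>0$ because $\phi(z)=z^{\gamma}$ must be strictly increasing (equivalently, $\beta^{\gamma}<1$ is only possible when $\gamma>0$). In this notation $\phi(z)=z^{\gamma}$, $\phi^{-1}(z)=z^{1/\gamma}$, and $\phi_{y}(z)=\big((1-\beta)y+\beta z\big)^{\gamma}$.

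First I would use the identity $\psi_{y}'(z)=\phi_{y}'(\tilde z)/\phi'(\tilde z)$ recorded in the discussion preceding the corollary, where $\tilde z=\phi^{-1}(z)=z^{1/\gamma}$. A direct computation gives
\[
\psi_{y}'(z)=\frac{\gamma\beta\big((1-\beta)y+\beta\tilde z\big)^{\gamma-1}}{\gamma\,\tilde z^{\gamma-1}}=\beta\left(\beta+\frac{(1-\beta)y}{\tilde z}\right)^{\gamma-1}.
\]
In the CRRA setting the flow utility $y=u(\pi+\varepsilon_{d})=(\pi+\varepsilon_{d})^{1-\rho}$ is nonnegative and the relevant continuation values---hence $\tilde z=z^{1/\gamma}$---are strictly positive, so the base satisfies $\beta+\tfrac{(1-\beta)y}{\tilde z}\ge\beta>0$. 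Because $\gamma-1\le 0$, the map $t\mapsto t^{\gamma-1}$ is nonincreasing on $(0,\infty)$, whence $\big(\beta+\tfrac{(1-\beta)y}{\tilde z}\big)^{\gamma-1}\le\beta^{\gamma-1}$ and therefore $\psi_{y}'(z)\le\beta\cdot\beta^{\gamma-1}=\beta^{\gamma}$.

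This bound is independent of $d$, of $\pi\in[\underline\pi,\bar\pi]$, of the realization of $\varepsilon$, and of $z\in[\underline v,\bar v]$, so the quantity $M$ in \eqref{eq:M} satisfies $M\le\beta^{\gamma}=\beta^{\frac{1-\alpha}{1-\rho}}<1$ by hypothesis, and Theorem \ref{theorem:contraction} gives that $T$ is a contraction. I do not expect a genuine obstacle here: the only points needing care are verifying $\gamma\in(0,1]$ (this is exactly where $\rho\le\alpha$ is used, and it is the ``prefers early resolution'' case of Corollary \ref{corr:earlyCRRA}, in which $\psi_{y}'(\cdot)$ is increasing in $z$ so one could alternatively evaluate the supremum at $z=\bar v$), checking the positivity of $y$ and $\tilde z$ on the CRRA domain, and getting the direction of the power inequality right from $\gamma-1\le0$. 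Notably, no moment condition on $\varepsilon$ is required, since the bound on $\psi_{y}'$ does not depend on $\varepsilon$ at all.
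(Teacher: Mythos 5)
Your proposal is correct and follows essentially the same route as the paper: compute $\psi_{y}'(z)$ in closed form for $\phi(z)=z^{\frac{1-\alpha}{1-\rho}}$, observe that the exponent $\frac{1-\alpha}{1-\rho}-1\le 0$ when $\rho\le\alpha$ while $y\ge 0$ forces the base to be at least $\beta$, conclude $\psi_{y}'(z)\le\beta^{\frac{1-\alpha}{1-\rho}}$ uniformly, and invoke Theorem \ref{theorem:contraction}. Your version is in fact slightly more carefully written (the uniform bound's independence of $\varepsilon$ and the sign of the exponent are made explicit), but there is no substantive difference from the paper's argument.
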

\begin{proof}
Since $\phi\left(z\right)=z^{\frac{1-\alpha}{1-\rho}}$, 
\begin{alignat*}{1}
\psi_{y}^{\prime}\left(z\right) & =\frac{\beta\left(\left(1-\beta\right)y+\beta z^{\frac{1-\rho}{1-\alpha}}\right)^{\frac{1-\alpha}{1-\rho}-1}}{\left(z^{\frac{1-\rho}{1-\alpha}}\right)^{\frac{1-\alpha}{1-\rho}-1}}\\
 & =\beta\left(\left(1-\beta\right)yz^{\frac{1-\alpha}{1-\rho}}+\beta\right)^{\frac{1-\alpha}{1-\rho}-1}.
\end{alignat*}
If $\rho\leq\alpha$, then $\frac{1-\alpha}{1-\rho}-1\leq0$. Since
$y\geq0$, this means
\[
\psi_{y}^{\prime}\left(z\right)\leq\beta\left(\beta\right)^{\frac{1-\alpha}{1-\rho}-1}=\beta^{\frac{1-\alpha}{1-\rho}}.
\]
Thus $\beta^{\frac{1-\alpha}{1-\rho}}<1$ ensures a contraction mapping.
\end{proof}
Note that in the other $\rho>\alpha$ case, we can still use Theorem
\ref{theorem:contraction} to obtain a sufficient condition for a
contraction mapping but the bound $M$ in Equation (\ref{eq:M}) would
depend on the error distribution. The reason we are able to obtain a simple sufficient condition for when the agent prefers early resolution of uncertainty ($\rho\leq\alpha$) is that the value function under CRRA Epstein-Zin is bounded below (since $u(c)$ has to be non-negative). The fact that values are bounded below and $\phi$ is concave given  $\rho\leq\alpha$ (see Corollary \ref{corr:earlyCRRA}), allows us to create bounds to ensure a contraction mapping.


With CARA Epstein-Zin preferences, Theorem \ref{theorem:contraction}
implies the following.

\begin{corollary}\label{coro:contraction}Suppose the agent has CARA
Epstein-Zin preferences. If $\rho\geq\alpha$ and $\beta^{\frac{\alpha}{\rho}}<1$,
then $T$ is a contraction mapping. \end{corollary}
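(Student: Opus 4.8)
The plan is to follow the template of the CRRA case (Corollary \ref{coro:contraction_CRRA}), now with the CARA time aggregator $\phi(z)=\alpha^{-1}\bigl(1-(1-\rho z)^{\alpha/\rho}\bigr)$, and to show directly that the quantity $M$ in Theorem \ref{theorem:contraction} satisfies $M\le\beta^{\alpha/\rho}$. First I would differentiate: $\phi'(z)=(1-\rho z)^{\frac{\alpha}{\rho}-1}$, hence $\phi_y'(z)=\beta\,\phi'\bigl((1-\beta)y+\beta z\bigr)=\beta\bigl(1-\rho((1-\beta)y+\beta z)\bigr)^{\frac{\alpha}{\rho}-1}$. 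Plugging into $\psi_y'(z)=\phi_y'(\tilde z)/\phi'(\tilde z)$ with $\tilde z=\phi^{-1}(z)$ (recall this identity from the discussion following Theorem \ref{theorem:contraction}), and using the algebraic identity $1-\rho\bigl((1-\beta)y+\beta\tilde z\bigr)=(1-\beta)(1-\rho y)+\beta(1-\rho\tilde z)$, I obtain
\[
\psi_y'(z)=\beta\left((1-\beta)\,\frac{1-\rho y}{1-\rho\tilde z}+\beta\right)^{\frac{\alpha}{\rho}-1}.
\]

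The next step is to bound the base of this power from below by $\beta$, using two positivity facts. Since $u=u_\rho$ and the relevant value of $y$ in $M$ is $u(\pi+\varepsilon_d)$, we have $1-\rho y=1-\rho\,u_\rho(\pi+\varepsilon_d)=e^{-\rho(\pi+\varepsilon_d)}>0$; and since $\tilde z=\phi^{-1}(z)$ lies in the range of $u_\rho$, which is $(-\infty,\rho^{-1})$, we have $1-\rho\tilde z>0$. Hence $\frac{1-\rho y}{1-\rho\tilde z}>0$, and because $1-\beta\ge0$ the base $(1-\beta)\frac{1-\rho y}{1-\rho\tilde z}+\beta$ is at least $\beta$. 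Now $\rho\ge\alpha$ forces the exponent $\frac{\alpha}{\rho}-1\le0$, so raising a quantity that is $\ge\beta$ to this non-positive power reverses the inequality and gives $\psi_y'(z)\le\beta\cdot\beta^{\frac{\alpha}{\rho}-1}=\beta^{\frac{\alpha}{\rho}}$. This bound is uniform in $y$, $z$, and $\varepsilon$, so taking the relevant supremum and expectation gives $M\le\beta^{\frac{\alpha}{\rho}}<1$ by hypothesis, and Theorem \ref{theorem:contraction} yields that $T$ is a contraction.

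I do not anticipate a genuine obstacle; as in the CRRA case, the only delicate point is bookkeeping of signs. The structural reason the argument closes (mirroring the CRRA remark above) is that the CARA form bounds $u$ above by $\rho^{-1}$, which keeps both $1-\rho y$ and $1-\rho\tilde z$ strictly positive, while $\rho\ge\alpha$ makes $\phi$ convex; together these let the harmless lower bound $\beta$ inside the power, combined with a non-positive exponent, collapse the supremum over $\pi$ and $z$ into the single clean bound $\beta^{\alpha/\rho}$, with no dependence on the error distribution.
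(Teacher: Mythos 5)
Your proposal is correct and follows essentially the same route as the paper: you compute $\psi_y'(z)=\beta\bigl((1-\beta)\tfrac{1-\rho y}{1-\rho\tilde z}+\beta\bigr)^{\frac{\alpha}{\rho}-1}$ (which, since $1-\rho\phi^{-1}(z)=(1-\alpha z)^{\rho/\alpha}$, is exactly the paper's expression), bound the base below by $\beta$ using positivity of $1-\rho y$ and $1-\rho\tilde z$, and use the non-positive exponent from $\rho\geq\alpha$ to conclude $M\leq\beta^{\alpha/\rho}<1$. The only difference is that you spell out the sign bookkeeping that the paper leaves implicit.
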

\begin{proof}
Since $\phi=u_{\alpha}\circ u_{\rho}^{-1}$, 
\begin{alignat*}{1}
\psi_{y}^{\prime}\left(z\right) & =\frac{\beta\left(\left(1-\beta\right)\left(1-\rho y\right)+\beta\left(1-\alpha z\right)^{\frac{\rho}{\alpha}}\right)^{\frac{\alpha}{\rho}-1}}{\left(\left(1-\alpha z\right)^{\frac{\rho}{\alpha}}\right)^{\frac{\alpha}{\rho}-1}}\\
 & =\beta\left(\left(1-\beta\right)\frac{1-\rho y}{\left(1-\alpha z\right)^{\frac{\rho}{\alpha}}}+\beta\right)^{\frac{\alpha}{\rho}-1}.
\end{alignat*}
Since $\rho\geq\alpha$, $\frac{\alpha}{\rho}-1\leq0$ so
\[
\psi_{y}^{\prime}\left(z\right)\leq\beta\left(\beta^{\frac{\alpha}{\rho}-1}\right)=\beta^{\frac{\alpha}{\rho}}.
\]
Thus $\beta^{\frac{\alpha}{\rho}}<1$ ensures a contraction mapping. 
\end{proof}
In other words, for an agent with CARA Epstein-Zin preferences and
prefers late resolution of uncertainty (i.e. $\rho\geq\alpha$), $T$
is a contraction mapping as long as $\beta$ is sufficiently small
(i.e. $\beta^{\frac{\alpha}{\rho}}<1$). The fixed point would be
unique. As with the CRRA Epstein-Zin case, we can still use Theorem
\ref{theorem:contraction} for the case when $\rho<\alpha$ but the
bound $M$ would again depend on the error distribution. Note that in contrast to CRRA Epstein-Zin, values are bounded above under CARA Epstein-Zin, so it is when the agent prefers late resolution of uncertainty ($\rho\geq\alpha$) where we can obtain simple sufficient conditions for a contraction mapping.

\section{Estimation}\label{sec:estimation}


We propose to use an estimation approach similar to \cite{rust1987optimal}'s Nested Fixed Point algorithm to estimate the structural parameters in the dynamic model described in Section \ref{sec:ddc}.\footnote{Since \cite{rust1987optimal}, several estimation methods for standard DDC models became available, such as the CCP approach of \citet{hotz1993conditional}, the nested pseudo-likelihood approach of \citet{aguirregabiria2002swapping} and their extensions to various settings. See \citet{aguirregabiria2010dynamic} for a survey of methodologies. Recently, \citet{igami2017estimating} and \citet{igami2020mergers} extend the Nested Fixed Point algorithm to dynamic oligopoly games with sequential or stochastically alternating moves.}
In our model, the choice-specific value function is not additively separable in the idiosyncratic shock $\varepsilon$ due to the nonlinear nature of the non-separable time preferences. Therefore, the well-known multinomial logit formula for the conditional choice probabilities and the McFadden's social surplus function are no longer applicable. We describe in this section how we obtain the conditional choice probabilities using simulation methods. Our nested fixed point algorithm contains an inner loop that solves a DDC problem for a given set of parameter values and an outer loop that maximizes the log-likelihood.    

We focus on the estimation of two parameterizations of the model introduced in Section \ref{sec:para}. 
Let $\theta = (\alpha, \rho, \theta_\pi)$ denote the vector of parameters to be estimated, where 
$\alpha$ and $\rho$ are parameters in the utility function $u$ and aggregator function $\phi$ in either the CRRA or CARA Epstein-Zin case. These two parameters capture the agent's risk and intertemporal preferences.  
$\theta_\pi$ represents the parameters in the payoff function $\pi(d_t, x_t, \Delta_t)$.
Throughout the estimation, we assume that the value of the discount factor $\beta$ is known to the econometrician.\footnote{The
identification and estimation of the discount factor $\beta$ in standard time-separable models is known to be difficult and often requires exclusion restrictions or the availability of terminating actions. See discussions in \cite{fang2015estimating}, \cite{bajari2016identification}, \cite{komarova2018joint}, \cite{abbring2020identifying}, \cite{schneider2021identification}.}
Let $x_{it}$ and $d_{it}$ denote the observable state and choice for agent $i=1, 2, \cdots, N$ at period $t=1, 2, \cdots, T$.
$\Delta_{it}$ represents the realization of current period uncertainty. 
We derive the likelihood of the choice pattern and state transition process observed in the data:  
\begin{align}\label{eq:ll}
LL(\theta) = \sum_i^{N} \log\bigg(Pr(d_{i1} | x_{i1}; \theta) \prod_{t=2}^T 
Pr(d_{it} | x_{it}; \theta)Pr(x_{it} |d_{it-1}, x_{it-1}, \Delta_{it-1})
\bigg).
\end{align}
Note that $Pr(d_{it} | x_{it}; \theta)$ represents the choice probability conditional on the observable state $x_{it}$ and $Pr(x_{it} | d_{it-1}, x_{it-1}, \Delta_{it-1})$ represents the state transition probability, which can be estimated separately outside of the structural model.

For a given parameter $\theta$, we solve the ex-ante value function $V(x_t)$ through an iterative procedure outlined as follows. 
\begin{enumerate}[\hspace{0.6cm}(1)]
    \item Initiate the iteration process with a chosen $V^0(x_t)$. Denote the value function at the $r$-th iteration as $V^r(x_t)$.
    \item Given $V^r(x_t)$, compute the choice-specific value function $v(d_t, x_t, \varepsilon^s)$ for a random draw $\varepsilon^s$ from $G$: 
    \begin{equation*}
\begin{split}
\mathbb{E}_{\Delta_{t}|d_{t},x_{t}}\bigg[\phi\bigg(\left(1-\beta\right)u\big(c\left(d_{t},x_{t},\Delta_{t},\varepsilon^s(d_t)\right)\big)+\beta\phi^{-1}\big(\mathbb{E}_{x_{t+1} | d_t, x_t, \Delta_t}[V^r(x_{t+1})]\big)\bigg)\bigg].
\end{split}
\end{equation*}
\item Update $V^{r+1}(x_t)$ by taking the average over $S$ simulation draws:
\begin{align*}
V^{r+1}(x_t) \approx \frac{1}{S}\sum_{s=1}^S\left(\max_{d\in \mathcal{D}}\bigg\{v(d, x_t, \varepsilon^s))  \bigg\} \right). 
\end{align*}
\item Repeat Steps (2)--(3) until convergence.
\end{enumerate}

The contraction property proved in Theorem \ref{theorem:contraction} guarantees the uniqueness of the value function. 
For our specific parameterizations of non-separable preferences, Corollaries \ref{coro:contraction_CRRA} and \ref{coro:contraction} provide sufficient conditions for a contraction mapping when agents have certain preference for the temporal resolution of uncertainty under CRRA and CARA Epstein-Zin preferences, respectively. When the bound $M$ in Theorem \ref{theorem:contraction} is difficult to compute, we can check empirically whether the largest and the smallest fixed points coincide with each other. 
Theorem \ref{theorem:existence} guarantees that we obtain the largest (smallest) fixed point of $T$ through the iterative procedure above if we start iterating from the largest (smallest) possible values of $V$.
In all of our simulation exercises and empirical application, the fixed points are all unique.

At the point of convergence, the conditional choice probabilities $Pr(d_{it}|x_{it};\theta)$ can also be computed numerically.
\begin{align*}
  p(d_t|x_t) \approx \frac{1}{J}\sum_{j=1}^J \mathbf{1}\bigg\{ d_t=\arg\max_{d\in \mathcal{D}}v(d, x_t, \varepsilon^j) \bigg \}.  
\end{align*}
In the outer loop of the algorithm, we search for the value of $\theta \in \Theta$ to maximize the log-likelihood function in Equation (\ref{eq:ll}), i.e., 
the estimator for the structural parameter is 
$\hat{\theta}=\max_{\theta \in \Theta}  LL(\theta)$.

Before moving on to the empirical example, we briefly discuss identification of the model parameters. With the introduction of non-separable time preferences, our model is much richer than the standard additively-separable model. Most importantly, the value functions in our model are non-separable in the random shocks $\varepsilon_t$. 
Existing identification results for time separable models (e.g., \citeauthor{magnac2002identifying}, \citeyear{magnac2002identifying}; \citeauthor{arcidiacono2020identifying}, \citeyear{arcidiacono2020identifying}) rely on the inversion from CCPs to value functions given additively separable random shocks (\citeauthor{hotz1993conditional}, \citeyear{hotz1993conditional}), and thus do not directly apply to our model. 

Our empirical model imposes specific parametric assumptions on $u$ and $\phi$. These parameters affect the agent's risk and intertemporal preferences in non-trivial ways so deriving generic identification results is not a straightforward exercise. All model parameters (i.e., $\theta$) jointly determine the conditional choice probabilities $p(d_t|x_t)$ in Equation \eqref{ccp}.
Matching the model-implied CCPs with their empirical counterparts for various $x_t$, we essentially characterize $\theta$ as the solution to a system of nonlinear equations. Given the high nonlinearity of the dynamic model, one possibility is to at least locally identify $\theta$ given certain rank conditions (see \cite{lewbel2019identification} for more detailed discussions and examples on local identification). Since the main purpose of our paper is to provide a general framework
of dynamic discrete choice models that allow for non-separable time preferences, we
leave a thorough investigation of identification for future research.

\section{Empirical Application: Optimal Engine Replacement Revisited}\label{sec:application}

We apply our model to the dataset of bus engine replacement decisions originally studied in \cite{rust1987optimal}. 
The manager (Harold Zuercher) 
makes a dynamic
choice for each bus engine by trading off between an immediate lump sum cost of replacing it
and higher maintenance costs for keeping it at the beginning of each period.
We extend the Rust model to allow for non-separable time preferences, agents being risk-averse, and earning revenues from operating the bus.

\subsection{Setup}\label{sec:model_emp}

The observable state variable $x_t$ represents the accumulated mileage at the beginning of period $t$. 
Let $\varepsilon_t$ denote the unobserved payoff  shocks and $\Delta_t$ denote the incremental mileage realized within period $t$.
The agent decides whether or not to replace the engine upon observing $(x_t, \varepsilon_t)$.  Let $d_t=1$ represent the case where the engine is replaced, and $d_t=0$ otherwise. The timeline of the model is shown in Figure \ref{fig:timeline1}.

\begin{figure}[htpb!]
    \caption{Timeline of the model}
\begin{center}
\begin{tikzpicture}

  \draw[-stealth] (0,0) -- (14,0);

   \foreach \x in {0,1.5, 4,7,10,12.5}
   \draw (\x cm,4pt) -- (\x cm,-3pt);
   \draw (0,0) node[below=7pt] {period $t$ } node[above=7pt] {};
   \draw (1.5,0) node[below=7pt] {$x_t$} node[above=7pt] {state realizes};
   \draw (4,0) node[below=7pt] {$\ep_t$} node[above=7pt] {shock };
   \draw (7,0) node[below=7pt] {$d_t \in \{0,1\}$} node[above=7pt] {decision};
   \draw (10,0) node[below=7pt] {$\Delta_t$} node[above=7pt] {mileage increases};
   \draw (12.5,0) node[below=7pt] {period $t+1$} ;
   \end{tikzpicture}
\end{center}
    \label{fig:timeline1}
\end{figure}

We first specify the agent's payoff function  
\begin{align}\label{eq:payoff}
    \pi(d_t, x_t, \Delta_t) =
    \begin{cases}
    \theta_d \Delta_t -RC & \text{ if } d_t=1 \\
    \theta_d \Delta_t - \theta_x x_t & \text{ if } d_t=0  \\
    \end{cases}.
\end{align}
In Equation (\ref{eq:payoff}), $RC$ represents the replacement cost and $\theta_d \Delta_t$ represents the revenue collected by the agent within a month for the incremental mileage $\Delta_t$. 
When the bus engine is not replaced, the agent has to pay a maintenance cost proportional to the accumulated mileage, i.e., $\theta_x x_t$. 
The realized consumption agent has in period $t$ is given by
\begin{equation}
c(d_t, x_t, \Delta_t, \varepsilon_t) = \pi(d_t,x_t, \Delta_t) + \sigma \varepsilon_t(d_t),
\end{equation}
where  $\sigma$ is the standard deviation of the random shock and 
$\varepsilon_t(d_t)$ is the shock that the agent receives for alternative $d_t \in \{0, 1\}$. Note that  
$\varepsilon_t=(\varepsilon_t(1), \varepsilon_t(0))$ and it is randomly drawn from a cumulative distribution function $G(\cdot)$.
Note that in this empirical application, the payoff agent receives might be negative. For example, if the agent chooses to replace the engine, he needs to pay an immediate replacement cost. We adopt the 
CARA Epstein-Zin preferences described in Section \ref{sec:cara} as it allows for negative payoffs.

Applying the CARA parameterization to the choice-specific value function defined in Equation (\ref{V2va_simp}), we have
\begin{align*}
v(d_t, x_t, \varepsilon_t) = \frac{1}{\alpha}\bigg[1-\mathbb{E}_{\Delta_t | d_t, x_t}\bigg\{ &(1-\beta)\exp\bigg(-\rho(\pi(d_t, x_t, \Delta_t)+\sigma \varepsilon_t(d_t))\bigg)  \\ 
+&\beta\bigg(1-\alpha V(x_{t+1})\bigg)^{\frac{\rho}{\alpha}}\bigg\} ^{\frac{\alpha}{\rho}}\bigg],
\end{align*}
where $V(x_{t+1})$ is the ex-ante value function given by
\begin{align*}
    V(x_{t+1})=\int \bigg(\max \bigg\{v(0, x_{t+1},\varepsilon), v(1, x_{t+1}, \varepsilon)\bigg\} \bigg)dG(\varepsilon),
\end{align*}
and the next period accumulated mileage $x_{t+1}$ is updated following:
\begin{align*}
    x_{t+1}=(1-d_t)x_t+\Delta_t.
\end{align*}
Intuitively, if the agent chooses to replace the engine, i.e., $d_t=1$, the current accumulated mileage is reset to zero, so the future mileage equals to the incremental mileage $\Delta_t$. If the agent does not replace the engine, the next-period accumulated mileage is increased by $\Delta_t$. Note that in this engine replacement example, the future state variable $x_{t+1}$ is uniquely determined once $d_t$, $x_t$ and $\Delta_t$ are realized. In other words, there is no uncertainty with respect to $x_{t+1}$ conditional on $(d_t, x_t, \Delta_t)$. We therefore omit the expectation $\mathbb{E}_{x_{t+1}|d_t, x_t, \Delta_t}$ in front of $V(x_{t+1})$.
The probability that an engine is replaced conditional on the observed state $x_t$ is
\begin{align*}
p(d_t=1 | x_t)=\int \mathbf{1}\bigg(v(1, x_t, \varepsilon_t) > v(0, x_t, \varepsilon_t) \bigg)dG(\varepsilon_t).
\end{align*}

{
Based on the empirical model described above, we conduct a numerical exercise to compare the conditional choice probabilities under separable and non-separable preferences. 
In Figure \ref{fig:mispec}, the red and blue curve represent the probabilities of replacing the engine when the agent has separable and non-separable time preferences, respectively. Throughout the exercise, we fix the intertemporal preference parameter $\rho$ to be 0.5 and change the value of risk preference parameter $\alpha$. Note that when $\alpha=\rho=0.5$, the two models are equivalent, resulting in the same conditional choice probabilities (i.e., the red and blue curves intersect at $\alpha=0.5$). When the agent prefers late resolution of uncertainty (i.e., $\rho > \alpha$), if we misspecify the agent's preference to be time-separable and use the observed CCP to estimate the agent's risk preference, we would over-estimate $\alpha$. Conversely, we would under-estimate $\alpha$ if the agent prefers early resolution of uncertainty (i.e., $\rho<\alpha$). 
This exercise highlights the consequences of misspecifying the agent's preference to be time-separable when it is in fact not. Importantly, the direction of the bias depends on whether the agent prefers early or late resolution of uncertainty. 

\begin{figure}
 \caption{Misspecifying nonseparable preferences with separable preferences: biased estimates of the risk preference parameter $\alpha$} 
 \label{fig:mispec} 
 \begin{center}\includegraphics[width=0.80\textwidth]{./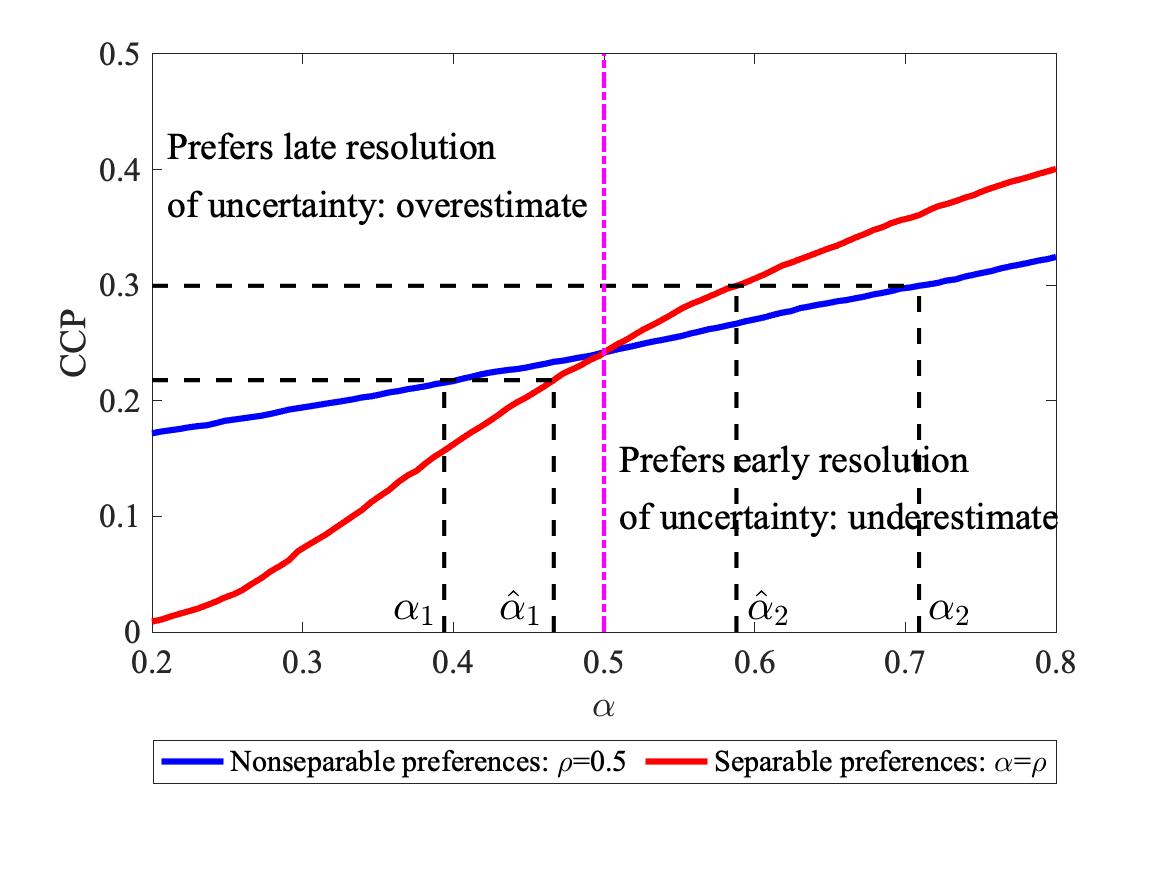}
     \end{center}   
\footnotesize 
\textit{Note}: In this numerical exercise, we set $RC=3$, $\theta_d=3$, $\theta_x=0.5$, $\sigma=2$, $\beta=0.9$, and $\rho=0.5$. $\alpha$ takes values from [0.2, 0.8]. For simplicity, we allow the incremental and accumulated mileages to take 3 values: 0, 1, and 2. The state transition probabilities are specified as: $Pr(\Delta_t=0|x_t=0)=0$, $Pr(\Delta_t=1|x_t=0)=0.5$,
$Pr(\Delta_t=2|x_t=0)=0.5$; $Pr(\Delta_t=0|x_t=1)=0.2$, $Pr(\Delta_t=1|x_t=1)=0.6$,
$Pr(\Delta_t=2|x_t=1)=0.2$;
$Pr(\Delta_t=0|x_t=2)=0.6$, $Pr(\Delta_t=1|x_t=2)=0.4$,
$Pr(\Delta_t=2|x_t=2)=0$.
\end{figure}


\subsection{Estimation Results}\label{sec:est_results}

In the estimation, we discretize mileage into 130 intervals of length 3000 (miles) and assume that $\varepsilon$'s are drawn from a standard normal distribution.
To identify the scale of the unobserved random shock, we fix the $RC$ to be 8 based on the cost of engine replacement reported by Harold Zurcher (see \cite{rust1987optimal} Table III). 
We set the discount factor $\beta=0.9$ and use 2,500 simulation draws when approximating value functions in the iteration process.
To summarize, the structural primitives to be estimated include: parameters for risk-aversion and intertemporal preferences, $\alpha$ and $\rho$; coefficient for revenue and maintenance cost, $\theta_d$ and $\theta_x$; 
and standard deviation of the idiosyncratic shock, $\sigma$.

In Table \ref{tab:main_est_breakdown}, we present the estimation results for four model specifications with standard errors provided in the parenthesis. 
We first estimate a model with non-separable time preferences, where we impose no restrictions on the values of $\alpha$ and $\rho$. Second, we restrict that $\alpha=\rho$, which leads to a model with separable time preferences and risk aversion. Next, we consider adding revenue to the original Rust model. 
Finally, for comparison purposes we also estimate the original Rust model\footnote{In \cite{rust1987optimal}, the distribution of $\varepsilon$ is assumed to be Type I extreme values, mainly for computational convenience. 
We use normal distribution for the shocks $\varepsilon$.}.

Note that in the Rust model with revenue, the choice-specific value function is define as
\begin{equation*}
v(d_t, x_t, \varepsilon_t) =  \mathbb{E}_{\Delta_t | d_t, x_t}\bigg[(1-\beta)\bigg(\pi(d_t, x_t, \Delta_t)+\sigma\varepsilon_t(d_t)\bigg)+\beta V(x_{t+1})\bigg].
\end{equation*}
This is nested in the non-separable case where $\alpha = \rho \rightarrow 0$. 
The original Rust model further removes the revenue from the payoff function $\pi$ (or in other words, restricting $\theta_d=0$), so that the choice-specific value function is reduced to:
\begin{equation*}
v(d_t, x_t, \varepsilon_t) =  (1-\beta)\bigg(\pi(d_t, x_t)+\sigma\varepsilon_t(d_t)\bigg)+\beta \mathbb{E}_{\Delta_t| d_t, x_t}[V(x_{t+1})].
\end{equation*}

From the first column of Table \ref{tab:main_est_breakdown}, we find that $\rho > \alpha$, suggesting that the agent is likely to prefer late resolution of uncertainty. Comparing the log-likelihood between Columns (1) and (2), we are able to reject the hypothesis that the agent has a separable preference at a 10\% significant level (the likelihood ratio test statistic = 2.747, $p$-value = 0.097). 
When allowing for non-separable time preferences, our estimate for $\alpha$ is smaller. This is consistent with our simulation exercises: when agents prefer late resolution of uncertainty, misspecifying the agent's preference to be time-separable results in an overestimate of the risk preference parameter $\alpha$.

Our estimates for the payoff parameters $\theta_d$ and $\theta_x$ are also significantly different when allowing for non-separable preferences. 
In particular, we underestimate the maintenance cost parameter $\theta_x$, while overestimate the revenue parameter $\theta_d$ under time-separable preferences.
This result highlights that misspecifying agents' preferences can lead to biased inferences about structural primitives, and potentially misleading policy recommendations. Comparing Columns (3)-(4), we find that adding revenue to the original rust model has a minimal impact.

\begin{table}[htbp!]
	\centering
	\caption{Comparing estimation results for different models}
	\label{tab:main_est_breakdown}
	\vspace*{0.2cm}
	\scalebox{0.85}{
\begin{tabular}{lcccc} \hline \hline 
& (1) & (2) & (3) & (4) \\
           & Nonseparable & Separable & Rust - rev & Rust-orig \\ \hline 
$\theta_d$ & 0.0526       & 0.1019    & 0.0001     &           \\
           & (0.0073)     & (0.0829)  & (0.0351)   &           \\
$\theta_x$ & 0.1077       & 0.0329    & 0.0208     & 0.0208    \\
           & (0.0122)     & (0.0022)  & (0.0009)   & (0.0014)  \\
$\sigma$   & 1.6070       & 1.5436    & 1.4883     & 1.4770    \\
           & (0.0491)     & (0.0623)  & (0.0542)   & (0.0566)  \\
$\alpha$   & 0.1023       & 0.1457    &            &           \\
           & (0.5087)     & (0.0095)  &            &           \\
$\rho$     & 0.5555       &           &            &           \\
           & (0.0200)     &           &            &           \\ \hline 
LL         & -299.4404    & -300.8139 & -301.4402  & -301.6273 \\ \hline \hline 
\end{tabular}
}
\end{table}

The parameters that capture agents' risk and time preferences ($\alpha$ and $\rho$) are important for predicting choices under uncertainty in dynamic settings and for making policy recommendations. 
To better interpret the estimated values of $\alpha$ and $\rho$, we compute the certainty equivalent given different model specifications. In the engine replacement model, the agent faces uncertainty about the incremental mileage in each period, which not only affects the current-period revenue, but also affects future maintenance costs.
We compute the monetary payoff that makes the agent indifferent when the uncertainty about incremental mileage in each period is removed. Specifically, we assume that in a counterfactual scenario there is a subsidy program that helps the agent to smooth revenue across periods. The agent receives $C$ each period no matter what the incremental mileage is. This is similar to the ``capacity payment" often seen for power
plants in the electricity industry and pipelines in the natural gas industry.\footnote{Regulators have adopted capacity payment mechanisms to encourage capacity investment and promote system reliability. See details of capacity payment mechanisms in electricity markets in \cite{pfeifenberger2009comparison}.} 

We compute the certainty equivalent using estimates from both separable and non-separable models.  
We find that agents with separable preferences are willing to accept a sure payment of \$120 each month to stay indifferent. However, for agents with non-separable time preferences, the calibrated subsidy payment is \$61.6, almost a half of what an agent with separable time preferences would require. This result is intuitive, because when the agent prefers late resolution of uncertainty, the original setting where there is uncertainty in revenue is less unfavorable.
This exercise confirms that misspecifying agents’ preferences as time-separable when it is in fact not can give rise to misleading policy implications.

\section{Conclusion}\label{sec:conclusion}

In this paper, we propose an empirical model of dynamic discrete choice with Epstein-Zin preferences, generalizing the well-known bus engine replacement model studied in \citet{rust1987optimal}. Based on the Bellman equation with Epstein-Zin preferences, we prove the
existence of the value function and provide conditions under which it is the unique fixed point of a contraction mapping. We propose a simulated nested fixed point algorithm for
model estimation based on the theoretical results. Since Epstein-Zin preferences include
 separable expected utility as a special case, our framework allows us to test whether
an agent has separable time preferences or not.

We apply the framework to the bus engine replacement data and show that we can reject the null hypothesis that Harold has time-separable preferences.  
Based on our estimates, the agent is likely to prefer late resolution of uncertainty. Misspecifying the agent's preference to be time separable when it is in fact not leads to a biased inference of agents' risk and intertemporal preferences as well as other key structural parameters. This has non-trivial consequences for making policy evaluations. 

The tools we develop in this paper can be easily applied to analyze many empirical dynamic discrete choice models in industrial organization, environmental, health, and labor economics while allowing for more general non-separable time preferences. 
For example, a recent empirical literature studies the impact of uncertainty on economic outcomes such as oil drilling, firm investment, technology adoption,  and exit decisions (see \citeauthor{kellogg2014effect}, \citeyear{kellogg2014effect};  \citeauthor{dorsey2019waiting}, \citeyear{dorsey2019waiting}; \citeauthor{handley2020measuring}, \citeyear{handley2020measuring}; \citeauthor{gowrisankaran2022policy}, \citeyear{gowrisankaran2012dynamics}).
In these applications, preferences for the temporal resolution of uncertainty 
would naturally play an important role in firms' dynamic responses to policy changes, such as resolving policy uncertainty early. 
Allowing for more general non-separable time preferences could lead to different welfare implications.

\clearpage 
\appendix


\section{Proofs}

To reduce  notational burden, we drop the subscript $t$ in the following proofs. 

\subsection{Proof of Theorem \ref{theorem:existence}}

\label{sec:proof_theorem_1}

Fix some $\underline{v}\leq v_{\ast}$ and $\bar{v}\geq v^{\ast}$.
We first show that for any $v\in\mathcal{V}$, if $v\leq\bar{v}$,
then $T\left(v\right)\leq\bar{v}$. Note that 
\begin{alignat*}{1}
T\left(v\right)\left(x\right) & \leq\int\max_{d\in D}\phi\left(\left(1-\beta\right)u\left(\bar{\pi}+\varepsilon_{d}\right)+\beta\phi^{-1}\left(\mathbb{E}\left[\bar{v}\right]\right)\right)dG\left(\varepsilon\right)\\
 & =\int\phi\left(\left(1-\beta\right)u\left(\bar{\pi}+\max_{d\in D}\varepsilon_{d}\right)+\beta\phi^{-1}\left(\bar{v}\right)\right)dG\left(\varepsilon\right)\\
 & =\mathbb{E}\left[\phi\left(\left(1-\beta\right)u\left(\bar{\pi}+\hat{\varepsilon}\right)+\beta\phi^{-1}\left(\bar{v}\right)\right)\right]
\end{alignat*}
First, if $\phi$ is convex, then
\begin{alignat*}{1}
T\left(v\right)\left(x\right) & \leq\mathbb{E}\left[\left(1-\beta\right)\phi\left(u\left(\bar{\pi}+\hat{\varepsilon}\right)\right)+\beta\bar{v}\right]\\
 & \leq\left(1-\beta\right)\mathbb{E}\left[\phi\left(u\left(\bar{\pi}+\hat{\varepsilon}\right)\right)\right]+\beta\bar{v}\\
 & \leq\left(1-\beta\right)v^{\ast}+\beta\bar{v}\leq\bar{v}
\end{alignat*}
as desired. Now, if $\phi$ is concave, then 
\begin{alignat*}{1}
T\left(v\right)\left(x\right) & \leq\phi\left(\mathbb{E}\left[\left(1-\beta\right)u\left(\bar{\pi}+\hat{\varepsilon}\right)+\beta\phi^{-1}\left(\bar{v}\right)\right]\right)\\
\phi^{-1}\left(T\left(v\right)\left(x\right)\right) & \leq\left(1-\beta\right)\mathbb{E}\left[u\left(\bar{\pi}+\hat{\varepsilon}\right)\right]+\beta\phi^{-1}\left(\bar{v}\right)\\
 & \leq\left(1-\beta\right)\phi^{-1}\left(v^{\ast}\right)+\beta\phi^{-1}\left(\bar{v}\right)\\
 & \leq\phi^{-1}\left(\bar{v}\right)
\end{alignat*}
as desired. 

For the case of $v\geq\underline{v}$ implies $T\left(v\right)\geq\bar{v}$,
note that if $v\geq\underline{v}$, then
\begin{alignat*}{1}
T\left(v\right)\left(x\right) & \geq\int\max_{d\in D}\phi\left(\left(1-\beta\right)u\left(\underline{\pi}+\varepsilon_{d}\right)+\beta\phi^{-1}\left(\mathbb{E}\left[\underline{v}\right]\right)\right)dG\left(\varepsilon\right)\\
 & \geq\mathbb{E}\left[\phi\left(\left(1-\beta\right)u\left(\underline{\pi}+\varepsilon\right)+\beta\phi^{-1}\left(\underline{v}\right)\right)\right]
\end{alignat*}
where the second inequality follows from the fact that $\varepsilon_{d}$
are identically distributed. The rest of the proof follows symmetrically
as the argument above.

Thus, $T:\mathcal{V}\rightarrow\mathcal{V}$ is well-defined. Let
$\geq$ be the partial order on $\mathcal{V}$ where $v\geq\tilde{v}$
if $v\left(x\right)\geq\tilde{v}\left(x\right)$ for all $x\in\mathcal{X}$.
Note that for any subset $\mathcal{U}\subset\mathcal{V}$, we can
define
\[
v_{\wedge}\left(x\right):=\inf_{v\in\mathcal{U}}v\left(x\right)
\]
which is the greatest lower bound of $\mathcal{U}$. Symmetrically,
we can define its least upper bound so $\left(\mathcal{V},\geq\right)$
is a complete lattice. Moreover, it is straightforward to see that
$T$ is monotonic, that is, $v\geq\tilde{v}$ implies $T\left(v\right)\geq T\left(\tilde{v}\right)$.
Thus, by Tarski's fixed point theorem, the set of fixed points of
$T$ is non-empty and also a complete lattice.  

Finally, define $v_{0}=\underline{v}$ and $v_{n}=T\left(v_{n-1}\right)$.
Since $\underline{v}\leq v$ for all $v\in\mathcal{V}$, $v_{0}\leq v_{1}$
so by induction, $v_{n}\leq v_{n+1}$ for all $n$. Since $v_{n}$
is an increasing sequence, it converges to some $v^{\ast}$. Since
$T$ is continuous, we have
\[
v^{\ast}=T\left(v^{\ast}\right)=\lim_{n}T^{n}\left(\underline{v}\right)
\]
To see why and $v^{\ast}$ is the smallest fixed point, suppose there
is some other fixed point $v^{\ast\ast}$. Since $v_{0}\leq v^{\ast\ast}$,
we have 
\[
T^{n}\left(v_{0}\right)\leq T^{n}\left(v^{\ast\ast}\right)=v^{\ast\ast}
\]
so taking limits, we obtain $v^{\ast}\leq v^{\ast\ast}$. The case
for $\bar{v}$ is symmetric.

\subsection{Proof of Theorem \ref{theorem:contraction}}\label{sec:proof_theorem_2}

Recall that
\[
\psi_{y}\left(z\right)=\phi_{y}\left(\phi^{-1}\left(z\right)\right)
\]
so 
\[
T\left(v\right)\left(x\right)=\int\max_{d\in D}\mathbb{E}_{x,d}\left[\psi_{y}\left(\mathbb{E}_{d,x,\Delta}\left[v\right]\right)\right]dG\left(\varepsilon\right)
\]
where $y=u\left(\pi_{x,d,\Delta}+\varepsilon_{d}\right)$ depends
on the realizations of $x,d,\Delta$. We thus have
\begin{alignat*}{1}
 & T\left(v\right)\left(x\right)-T\left(\hat{v}\right)\left(x\right)\\
= & \int\left(\max_{d\in D}\mathbb{E}_{x,d}\left[\psi_{y}\left(\mathbb{E}_{d,x,\Delta}\left[v\right]\right)\right]-\max_{d\in D}\mathbb{E}_{x,d}\left[\psi_{y}\left(\mathbb{E}_{d,x,\Delta}\left[\hat{v}\right]\right)\right]\right)dG\left(\varepsilon\right)\\
\leq & \int\max_{d\in D}\mathbb{E}_{x,d}\left[\psi_{y}\left(\mathbb{E}_{d,x,\Delta}\left[v\right]\right)-\psi_{y}\left(\mathbb{E}_{d,x,\Delta}\left[\hat{v}\right]\right)\right]dG\left(\varepsilon\right)
\end{alignat*}
By the mean value theorem,
\[
\psi_{y}\left(z_{1}\right)-\psi_{y}\left(z_{2}\right)=\psi_{y}^{\prime}\left(z\right)\left(z_{1}-z_{2}\right)
\]
for some $z$ strictly between $z_{1}$ and $z_{2}$. Thus,
\begin{alignat*}{1}
 & T\left(v\right)\left(x\right)-T\left(\hat{v}\right)\left(x\right)\\
\leq & \int\max_{d\in D}\mathbb{E}_{x,d}\left[\sup_{\pi\in\left[\underline{\pi},\bar{\pi}\right],z\in\left[\underline{v},\bar{v}\right]}\psi_{u\left(\pi+\varepsilon_{d}\right)}^{\prime}\left(z\right)\left(\mathbb{E}_{d,x,\Delta}\left[v\right]-\mathbb{E}_{d,x,\Delta}\left[\hat{v}\right]\right)\right]dG\left(\varepsilon\right)\\
= & \int\max_{d\in D}\sup_{\pi\in\left[\underline{\pi},\bar{\pi}\right],z\in\left[\underline{v},\bar{v}\right]}\psi_{u\left(\pi+\varepsilon_{d}\right)}^{\prime}\left(z\right)\mathbb{E}_{x,d}\left[v-\hat{v}\right]dG\left(\varepsilon\right)\\
\leq & \int\max_{d\in D}\sup_{\pi\in\left[\underline{\pi},\bar{\pi}\right],z\in\left[\underline{v},\bar{v}\right]}\psi_{u\left(\pi+\varepsilon_{d}\right)}^{\prime}\left(z\right)dG\left(\varepsilon\right)\left\Vert v-\hat{v}\right\Vert \\
= & M\left\Vert v-\hat{v}\right\Vert 
\end{alignat*}
where
\[
M:=\int\max_{d\in D}\sup_{\pi\in\left[\underline{\pi},\bar{\pi}\right],z\in\left[\underline{v},\bar{v}\right]}\psi_{u\left(\pi+\varepsilon_{d}\right)}^{\prime}\left(z\right)dG\left(\varepsilon\right)
\]
We thus have
\[
\left\Vert T\left(v\right)-T\left(\hat{v}\right)\right\Vert \leq M\left\Vert v-\hat{v}\right\Vert 
\]
If $M<1$, then $T$ is a contraction mapping.

\clearpage 

\clearpage 
\bibliographystyle{ecta}
\bibliography{main}

\end{document}